\newtheorem{theorem}{Theorem}
\newtheorem{corollary}[theorem]{Corollary}
\newtheorem{lemma}[theorem]{Lemma}
\newtheorem{proposition}[theorem]{Proposition}
\newenvironment{proof}{\par\bproof}{\eproof\(\qed\) \par\medskip}
\newcommand{\qed}{\quad\mbox{\rule{7pt}{7pt}}}
\newcommand{\calC}{\mathcal{C}}
\newcommand{\aNAESAT}{\mathrm{NAE\text{-}3SAT}}
\newcommand{\PNAESAT}{\mathrm{{Positive\text{-}NAE\text{-}3SAT}}}    
\newcommand{\countPNAESAT}{\mathrm{{\# Positive\text{-}NAE\text{-}3SAT}}}  
\newcommand{\SatAssign}{\mathrm{{SatAssign}}}
\newcommand{\CountSatAssign}{\mathrm{{\# SatAssign}}}
\newcommand{\sharpP}{\# \mathrm{P}}
\def\ifpdf\input{#.pdf_t}\else\input{#.pstex_t}\fi1{\ifpdf\input{#1.pdf_t}\else\input{#1.pstex_t}\fi}
\newcommand{\setof}[1]{\{\,#1\,\}}
\newlength{\first}\newlength{\second}
\newcommand{\smidge}{{\kern .05em}}
\newlength{\subjtolt}\setlength{\subjtolt}{5em}
\newcommand{\NP}{{\rm NP}}
\newlength{\saveparindent}
\def\bproof{\begin{rm}\protect\vspace{5pt}\noindent{\bf Proof: }%
\addtolength{\parskip}{4pt}\setlength{\parindent}{0pt}}
\def\eproof{\end{rm}\addtolength{\parskip}{-4pt}%
\setlength{\parindent}{\saveparindent}}
\newcommand{\bprooff}[1]{\begin{rm}\protect\vspace{5pt}%
\noindent{\bf Proof of #1: }\addtolength{\parskip}{4pt}%
\setlength{\parindent}{0pt}}
\begin{document}

\title{Computational Hardness of Enumerating Satisfying Spin-Assignments
   in Triangulations}

\author{Andrea Jim\'enez\thanks{
  Depto.~Ing.~Matem\'{a}tica, U.~Chile.
  Web: \texttt{www.dim.uchile.cl/$\sim$ajimenez}.
  Gratefully acknowledges the support of
    MECESUP UCH0607, and CONICYT via Basal in Applied Mathematics and
    FONDECYT 1090227.}
  \and 
  Marcos Kiwi\thanks{
  Depto.~Ing.~Matem\'{a}tica \&
  Ctr.~Modelamiento Matem\'atico UMI 2807, U.~Chile.
  Web: \texttt{www.dim.uchile.cl/$\sim$mkiwi}.
  Gratefully acknowledges the support of
    CONICYT via Basal in Applied Mathematics and
    FONDECYT 1090227.}
}

\maketitle
\begin{abstract}
Satisfying spin-assignments in triangulations of a surface are states of
  minimum energy of the antiferromagnetic Ising model on triangulations
  which correspond (via geometric duality) to perfect matchings in cubic
  bridgeless graphs.
In this work we show that it is $\NP$-complete to decide
  whether or not a surface 
  triangulation admits a satisfying spin-assignment, and
  that it is $\sharpP$-complete to determine the number of such assignments. 
Both results are derived via an elaborate (and atypical) reduction that maps
  a Boolean formula in 3-conjunctive normal form into a triangulation of an
  orientable closed surface.
\end{abstract}

\begin{keywords} \textit{\small{Ising model; Triangulations; 
Groundstates; Parsimonious reduction; $\sharpP$-complete.}} 
\end{keywords}
\section{Introduction}
The Ising model is one of the most studied models 
  in statistical physics. 
Characterizing its behavior on a system~(graph) helps to 
  understand physical phenomena associated to its thermodynamic 
  properties~\cite{citeulike:1289400}. 
The Ising model has been widely studied in lattices 
  and regular structures (see for example~\cite{citeulike:2111390,LV}
  and references therein). 
In contrast, irregular systems have received much less attention,
  probably due to the difficulty of deriving meaningful analytical 
  results.

The number of distinct groundstates of the antiferromagnetic 
  (negative coupling constant) Ising 
  model of a system is called groundstate degeneracy
  and is typically 
  exponentially large as a function of a parameter that measures
  the system's size (the number of nodes of the underlying graph).
The latter translates to 
  nonzero entropy at zero temperature when the system size goes to
  infinity, which in physical terms means that 
  in the thermodynamical limit the spin arrangements of particles in
  the system is disordered. 
This partly explains the considerable attention physicist have
  given to developing techniques for approximating the 
  groundstate degeneracy of a system.

Typically, researchers have focussed on 
  developing techniques for bounding the groundstate degeneracy of a system,
  for example the Transfer Matrix Method~\cite[\S 6.6]{loebl09}.
Instead, informally speaking, in this work we focus on the following two 
  associated computational complexity problems; 
  (1) hardness of deciding whether or not a given system
  admits a satisfying state, and 
  (2) hardness of enumerating groundstates (equivalently,
  computing the groundstate degeneracy) of a given system.
We show that the former problem is $\NP$-complete and the 
  latter is $\sharpP$-complete.

We now make precise the notions discussed above and
  formally state our main results.
  
First, we describe the antiferromagnetic Ising model. 
We say that an 
  embedding of a graph in an orientable closed surface 
  is a \emph{surface triangulation} if each face is bounded by a cycle 
  of length $3$ (in particular there is no loop) --- 
  multiple edges allowed.  
Given a triangulation $T$, let $V(T)$ and $E(T)$ denote
  the node and edge set of $T$.
A mapping 
  $s:V(T)\to\setof{-1,+1}$ will be called a \emph{spin-assignment} 
  (state) to~$T$. 
We refer to $-1$ and $+1$ as \emph{spins}.
The \emph{energy} of a spin-assignment $s$
  of the antiferromagnetic Ising model 
  is defined as $\sum_{uv \in E(T)} \sigma(u) \cdot \sigma(v)$.
A \emph{groundstate} is a spin-assignment of minimum energy.
The number of distinct groundstates that a triangulation $T$	
  admits is often referred to as the \emph{groundstate degeneracy} of $T$.
Clearly, under any spin-assignment to $T$ 
  the ends of at least one edge of each face of a surface triangulation~$T$ 
  are both assigned either $\mbox{-}1$ or $+1$. 
Moreover, a spin-assignment is a groundstate if 
  it has the smallest possible number of edges
  with both its ends being assigned the same spin.
A face $\triangle$ of a surface triangulation~$T$
  is said to be \emph{frustrated} under assignment $s$,
  if $s$ restricted to $V(\triangle)$ is 
  not identically $\mbox{-}1$ or $+1$.
A spin-assignment $s$ to $T$ 
  is said to be \emph{satisfying} 
  (or \emph{feasible}) if every face of $T$ 
  is frustrated under $s$.
Obviously a satisfying spin-assignment is a groundstate.
The converse is true for triangulations that can be embedded
  in the plane~\cite{JKL}.
Nevertheless, the equivalence does not hold in general
  (the reader can verify that 
  the toroidal triangulation depicted in 
  Figure~\ref{fig:triang-non-satisfy}
  does not have satisfying spin-assignments). 
However, note that when satisfying spin-assignments exist, 
  then a groundstate is necessarily a satisfying spin-assignment.

\begin{figure}[h]
\centering
  \ifpdf\input{triang-non-satisfy.pdf_t}\else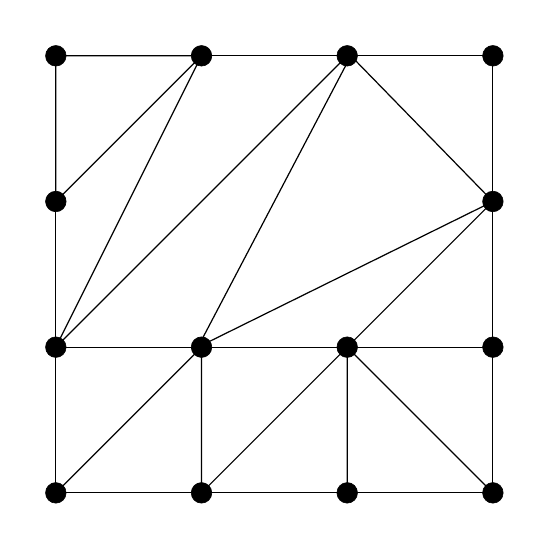\fi
\caption{A triangulation of the torus with no
  satisfying spin-assignment.}
\label{fig:triang-non-satisfy}
\end{figure}

In~\cite{JKL} a relation was established between the 
  groundstate degeneracy of the 
  antiferromagnetic and the number 
  of perfect matchings in cubic bridgeless graphs. 
Specifically, let $T$ be triangulation 
  of a orientable closed surface and $T^*$ its geometric dual.
In~\cite{JKL}, it is shown that the set of edges
  whose ends are assigned the same spin under
  a given satisfying spin-assignment to $T$ correspond 
  to a perfect matching of the cubic bridgeless
  graph $T^*$.  
Moreover, it is shown that
  if $T$ admits a satisfying spin-assignment, then
  the groundstate degeneracy of $T$ 
  is at most twice the number of perfect matchings
  of $T^*$. 
Thus, lower bounds on the groundstate degeneracy of $T$ 
  provide lower bounds on the number distinct of perfect matchings of 
  the cubic bridgeless graph $T^*$.
An old and famous conjecture of Lov\'asz and Plummer, 
  recently positively settled~\cite{EKKKN}, claimed that 
  the number of distinct perfect matchings of $T^*$ is
  exponential in the size (number of nodes) of $T^{*}$.
The relation between number of satisfying spin-assignments 
  of a surface triangulation and the number of perfect matchings
  of bridgeless cubic graphs is another one of our motivations 
  for considering the problem of computing the groundstate
  degeneracy of surface triangulations.

In this work we show that the problem of deciding if a 
  triangulation admits a satisfying spin-assignment
  is $\NP$-complete.  
We also establish that computing the 
  groundstate degeneracy of surface triangulations that 
  admit satisfiable spin-assignments is $\sharpP$-complete. 

%

\subsection{Contributions}\label{subsec:theproblem}
Let $T$ be a surface triangulation. 
For each $v$ in $V(T)$ let $\delta(v)$ denote the set of 
  edges incident to~$v$.
The map $\pi_{v}:\delta(v)\to\delta(v)$ is called 
  \emph{cyclic permutation of the edges incident to $v$} if
  for every $e$ incident to $v$ the edge
  $\pi_{v}(e)$ is the successor of $e$ in the clockwise ordering
  around $v$ defined by the surface embedding of $T$.
The tuple $\pi =(\pi_{v}: v\in V(T))$ is called 
  the \emph{rotation system} of $T$. 
A direct consequence 
  of the Heffter-Edmonds-Ringel rotation principle,
  is that every surface triangulation is uniquely determined, 
  up to homeomorphism,
  by its rotation system~\cite[\S 3.2]{MT}.

Let $\SatAssign$ be the collection of 
  (encodings of) rotation systems of surface triangulations
  that admit a satisfying spin-assignment.
Also, let $\CountSatAssign$ be the function mapping 
  (encodings of) rotation systems of surface triangulations to its 
  number of satisfying spin-assignments.

To see that $\SatAssign$ is in $\NP$, first recall that in order 
  to check that $\pi=(\pi_v:v\in V(T))$ is an instance of $\SatAssign$
  we need not start with a surface.
Indeed, it suffices to check for every $v\in V(T)$ 
  that $\pi_v$ is a cyclic permutation of 
  $\delta(v)=\{uv: \text{$uv$ is in $\pi_u$'s domain}\}$, a 
  task that can be performed in time quadratic in $|V(T)|$ time
  in the Random Access Model.
Then, observe that a certificate of membership in $\SatAssign$
  of a rotation system of a surface triangulation $T$ is simply a 
  spin-assignment $s:V(T)\to\setof{\mbox{-}1,+1}$ and that verifying that
  such an assignment is satisfying amounts to 
  checking that each face $\Delta$ of the surface 
  triangulation $T$ is frustrated under $s$ (which 
  can be checked in $O(1)$ time per face in the Random Access Model).

In this work we establish the following results.
\begin{theorem}\label{th:sat}
$\SatAssign$ is $\NP$-complete.
\end{theorem}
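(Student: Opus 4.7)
The plan is to reduce from $\PNAESAT$ (Positive Not-All-Equal 3SAT), which is a well known $\NP$-complete problem. Membership of $\SatAssign$ in $\NP$ has already been verified above, so only hardness remains. The reduction is guided by a single elementary observation: a triangular face $\{u,v,w\}$ is frustrated under a spin-assignment $s$ precisely when $s(u),s(v),s(w)$ are not all equal, which is exactly the not-all-equal constraint on three truth values if we identify $+1$ with \textsc{true} and $\mbox{-}1$ with \textsc{false}. A clause of $\PNAESAT$ therefore asks, verbatim, that a specific triangle be frustrated.

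Given an instance $\phi$ of $\PNAESAT$ with variables $x_1,\dots,x_n$ and clauses $C_1,\dots,C_m$, I would build a rotation system for a surface triangulation $T_\phi$ in two stages. First, each clause $C=\{x_i,x_j,x_k\}$ contributes a \emph{clause-triangle} whose three vertices are labelled by the three literals of $C$. Second, because a variable typically occurs in many clauses, its spin must be replicated synchronously across all of its occurrences; to this end I would design a \emph{variable gadget}, a triangulated annular piece whose two boundary cycles carry opposite spins in every satisfying spin-assignment, so that each boundary cycle supplies as many synchronized copies of the literal as required by the clause-triangles. The intended guarantee is a bijection between NAE-satisfying assignments of $\phi$ and satisfying spin-assignments of $T_\phi$; the resulting parsimony is what allows the same construction to yield Theorem~2.

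The main obstacle is assembling the clause-triangles and variable gadgets into a rotation system that genuinely describes an orientable closed surface triangulation, rather than merely a 2-complex. Concretely, I must prescribe at every vertex a cyclic ordering of its incident edges so that every resulting face is a triangle and every edge lies in exactly two faces. By the Heffter--Edmonds--Ringel principle, any consistent rotation system determines an orientable surface triangulation up to homeomorphism, so verification reduces to local consistency checks at each vertex; but producing a globally consistent rotation in the first place, with the right interplay between clause-triangles and variable gadgets, is the technically delicate step. In particular, after the edges connecting the variable gadgets to the clause-triangles have been routed I expect further auxiliary triangles to be required in order to close off the surface, and these must be designed so that their vertex spins are \emph{forced} by the neighbouring gadgets, so that no spurious satisfying spin-assignments are introduced and the correspondence with NAE-satisfying assignments remains exact. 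Once the gadgets and their embedding are in place, $\NP$-hardness follows at once, and polynomial-time computability of the reduction is immediate from its local description.
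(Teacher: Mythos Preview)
Your high-level plan---reduce from $\PNAESAT$ and exploit that face-frustration is literally the NAE constraint---matches the paper, but the proposal has a genuine gap at exactly the point you flag as ``technically delicate''. You postulate a variable gadget (an annular piece whose boundary vertices all carry a single synchronized spin) without constructing one, and there is no evidence that such a gadget exists in the form you describe. Forcing many distinct vertices of a triangulation to take the \emph{same} spin using only NAE constraints on triangular faces is the whole difficulty; a single NAE face permits six of the eight assignments and never forces an equality, so any equality must emerge from a carefully engineered global structure. The paper's solution is not a refinement of your sketch but a different design: it \emph{abandons} the encoding ``truth value $=$ spin of a vertex'' in favor of ``truth value $=$ monochromaticity of a designated $3$-cycle'', and then builds choice, block-replicator, and clause gadgets (each a specific toroidal piece, verified by exhaustive case analysis) that propagate monochromaticity and have distinguished \emph{serious} edges---edges forced to be monochromatic in every satisfying assignment---which serve as the equality primitive you are missing. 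In particular, a clause is not a single face but an entire toroidal gadget with three boundary $3$-cycles, interfaced to the rest of the construction along those cycles rather than along edges; this is what makes it possible to close up the surface without introducing spurious solutions.

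Two smaller issues: your claimed \emph{bijection} cannot hold as stated, since $s$ and $-s$ are always simultaneously satisfying (the paper gets a $2$-to-$1$ correspondence, and a further factor of $2$ after a connectedness trick); and your clause-as-single-face idea forces the three variable gadgets meeting at a clause to share \emph{edges}, which couples them in a way that is hard to control---the paper's cycle-level interfaces avoid this. As written, the proposal is a reasonable statement of intent but not a proof: the core gadget is asserted, not built, and the paper's construction suggests that the direct vertex-spin encoding you propose does not lead to a clean construction.
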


\begin{theorem}\label{th:count}
$\CountSatAssign$ is $\sharpP$-complete.
\end{theorem}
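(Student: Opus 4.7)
The plan is to establish $\CountSatAssign \in \sharpP$ via the same witness used to place $\SatAssign$ in $\NP$: a satisfying spin-assignment is the natural certificate, and frustration of each face is checked in $O(1)$ time per face, so the ``count of accepted witnesses'' definition of $\sharpP$ applies directly. For $\sharpP$-hardness, I would reduce from $\countPNAESAT$, which is a standard $\sharpP$-complete problem. This choice is natural because the frustration condition on a triangular face --- that its three incident vertices are not all assigned the same spin --- is \emph{literally} the not-all-equal condition on a clause of three literals under the identification of $\setof{\mbox{-}1,+1}$ with $\setof{\text{false},\text{true}}$.

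My strategy is to re-use the very reduction that proves Theorem~\ref{th:sat} and argue that it is parsimonious up to a fixed, efficiently computable multiplicative factor. Given an instance $\varphi$ of Positive-NAE-3SAT with $n$ variables and $m$ clauses, the reduction produces a surface triangulation $T(\varphi)$ comprising (a) clause gadgets consisting essentially of triangular faces whose vertex spins represent the three literals of each clause, (b) variable gadgets that propagate a single truth value to every clause-occurrence of the associated variable (up to the global spin-flip symmetry), and (c) topological scaffolding needed to close $T(\varphi)$ into an orientable closed surface with a valid rotation system.

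The counting correspondence then follows provided each gadget admits a known, instance-independent number of internal spin-completions once the relevant variable's truth value is fixed. If each such gadget contributes a factor $c_i$ independent of which NAE-satisfying assignment of $\varphi$ is chosen, then the number of satisfying spin-assignments of $T(\varphi)$ equals $2 \cdot \prod_i c_i \cdot N(\varphi)$, where $N(\varphi)$ is the number of NAE-satisfying assignments of $\varphi$ and the leading factor of $2$ accounts for the global $+1 \leftrightarrow \mbox{-}1$ symmetry. Dividing by $2 \prod_i c_i$ (a constant computable from $|\varphi|$) yields a polynomial-time Turing reduction, and in fact a parsimonious reduction up to a known factor.

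The main obstacle will be verifying this rigid-counting property of the gadgets. The combined demands of (i) yielding a globally consistent rotation system on an orientable closed surface, (ii) forcing each variable's spin to be consistent across all its occurrences without admitting spurious ``cheating'' completions, and (iii) keeping each gadget internally rigid so that it contributes only a predictable multiplicative factor, are the source of the difficulty alluded to in the abstract as an ``elaborate and atypical'' reduction. Should some gadgets admit a variable number of internal completions depending on the external assignment, the clean parsimonious argument must be replaced by a more delicate case analysis disentangling each gadget's contribution; avoiding this is exactly what makes the gadget design the crux of the proof.
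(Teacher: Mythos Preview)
Your proposal is correct and matches the paper's approach: the same gadget-based reduction from $\countPNAESAT$ is reused, each gadget is shown to admit a \emph{unique} satisfying extension once its boundary data are fixed (so in fact all $c_i=1$), and the outcome is a weakly parsimonious reduction with the factor of $2$ coming from duality. The one subtlety you should anticipate is that the global duality factor is $2$ only when $T_{\varphi}$ is connected; the paper handles this by first preprocessing $\varphi$ into a connected instance $\varphi'$ (doubling the NAE-count), so that the final relation reads $|\{\text{sat.\ spin-assignments of }T_{\varphi'}\}| = 4\cdot N(\varphi)$.
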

Both of the stated results follow from an elaborate 
  weakly parsimonious reduction~\cite[Definition 2.27]{cks} 
  that maps a Boolean function in $3$-conjunctive normal form 
  to a rotation system of a triangulation (equivalently, to a 
  triangulation embedded on a surface). 
As far as we are aware, this seems to be an atypical reduction,
  whose underlying ideas (e.g.~gadgets) might be of independent
  interest due to their potential usefulness in the study 
  of the computational hardness of other related spin glass
  problems.

\section{Reduction idea and gadgets}
The two main results of this work follow from reductions from
  a well known variant of the standard not-all-equal $3$-satisfiability 
  (abbreviated $\aNAESAT$) problem which is 
  known to be $\NP$-complete even in 
  the absence of negated variables~\cite{TJS}, 
  a variant we denote $\PNAESAT$. 
For completeness sake, we recall  in Figure~\ref{fig:pnaesat}
  the precise definition of $\PNAESAT$. 
Moreover, the counting version of $\PNAESAT$,
  namely $\countPNAESAT$, is $\sharpP$-complete~\cite{CH}. 
See Figure~\ref{fig:cpnaesat} for the precise definition of 
  $\countPNAESAT$.

\begin{figure}[ht]
\begin{tabular}{rp{5.0in}}\hline
\textsc{Problem} & \textsc{$\PNAESAT$} \\ \hline
\textsc{Input} & A Boolean formula $\varphi$ in $3$-conjunctive normal
  form such that each of its clauses $C_1,\ldots,C_m$ 
  has exactly three (all non-negated) literals. \\
\textsc{Output} & 
  \textsc{True} if there is a truth assignment to $\varphi$ such that for each
  clause $C_i$ not all of its variables are assigned the same 
  truth value. \\ \hline
\end{tabular}
\caption{$\PNAESAT$.}\label{fig:pnaesat}
\end{figure}

\begin{figure}[ht]
\begin{tabular}{rp{5.0in}}\hline
\textsc{Problem} & \textsc{$\countPNAESAT$} \\ \hline
\textsc{Input} & A Boolean formula $\varphi$ in $3$-conjunctive normal
  form such that each of its clauses $C_1,\ldots,C_m$ 
  has exactly three (all non-negated) literals. \\
\textsc{Output} & 
  A (binary encoding) of the number of distinct truth value assignments 
  to $\varphi$ such for each
  clause $C_i$ not all of its variables are assigned the same 
  truth value. \\ \hline
\end{tabular}
\caption{$\countPNAESAT$.}\label{fig:cpnaesat}
\end{figure}

The overall strategy we will follow in proving 
  Theorems~\ref{th:sat} and~\ref{th:count}
  is fairly standard, i.e.~we design gadgets where 
  truth values of variables are set (choice gadgets) and 
  gadgets where the truth value of clauses are evaluated 
  (clause gadgets).  
We need to ``carry'' truth values from choice gadgets to clause
  gadgets, and make as many copies of the truth values taken
  by a literal as times they appear in all clauses.
To achieve this task we build so called replicator
  gadgets.
However, the construction of the aforementioned gadgets is 
  quite delicate and non-obvious. 
In general, the main aspects we take care of 
  in the construction of each gadget are existence and uniqueness of 
  satisfying spin-assignments. 
However, there are subtle issues that need to be properly handled
  when building and piecing together the different gadgets.
Below, we describe in separate sections each of the gadgets we 
  will require for the reduction and establish that they
  satisfy certain properties.
First, we introduce some additional terminology and conventions
  we will use throughout the remaining part of this work.

\subsection{Preliminaries} 
Note that given a triangulation $T$, 
  a spin-assignment $s$ to $T$ is satisfying
  if and only if $\mbox{-}s$ is also a satisfying spin-assignment to $T$.
We shall refer to this fact as \emph{duality}.
  We will repeatedly use it in order to reduce the number of cases
  that need to be analyzed in order to establish some of the claims 
  we will make.
  If $T$ admits exactly two satisfying spin-assignments
  $s$ and $\mbox{-}s$, we say that $s$ ($\mbox{-}s$ respectively) is  
  unique up to duality.

A $3$-cycle in a triangulation 
  will be called \emph{positive} for a spin-assignment 
  if at least two of its vertices 
  are assigned spin $+1$. 
Otherwise,
  it will be called \emph{negative}. This concept will be referred
  to as the \emph{sign} of a $3$-cycle.

Henceforth, if $s$ assigns 
  the same spin to all nodes of a subgraph $H$ of $T$
  (respectively all elements of $S\subseteq V(T)$), 
  we say that $H$ (respectively a subset $S$)
  is \emph{monochromatic} under $s$   
Similarly, we say that an edge is monochromatic (respectively 
  non-monochromatic) under $s$ if~$s$ assigns the same 
  (respectively distinct) spins to both ends of the edge.
Monochromatic and non-monochromatic faces are defined analogously
  depending on whether or not its circumscribing cycle is 
  either monochromatic or non-monochromatic.
An edge $e$ in $E(T)$ will be called \emph{serious}
  if and only if
  $e$ is monochromatic under every satisfying spin-assignment to $T$.

The gadgets we build in this work are embedded graphs
  in orientable closed surfaces with some removed disks 
  (with holes) so that each face is bounded by 
  a $3$-cycle and each hole is circumscribed by a $3$-cycle.  
In other words, every gadget may be
  obtained from a triangulation by cutting along
  the boundary of some of its faces (triangles). 
Thence, every term defined 
  for surface triangulations is naturally
  adapted to gadgets so they will be reformulated only 
  in case it is needed.

Throughout this work, serious edges are depicted as thicker lines 
  and surface holes are depicted as gray areas.

\subsection{Choice gadget}
In this section we describe a gadget (a triangulation of a surface)
  that we will associate to Boolean variables in such a way
  that satisfying spin-assignments can be 
  unambiguously interpreted as truth assignments
  to the Boolean variables.
A \emph{choice gadget} is a triangulation 
  as depicted in Figure~\ref{fig:choice_gadget}
  embedded in a toroidal surface with one hole.
The cycle with node set $\{u,v,w\}$ circumscribing 
  the removed triangle of the choice gadget $L$,
  henceforth denoted by~$\calC_L$, will be referred to as the \emph{variable 
  cycle} of~$L$ and the edge $uw$ will be called the 
  \emph{fundamental edge} of~$L$ (see Figure~\ref{fig:choice_gadget}).

Our reduction will associate to each variable $x_i$ 
  a choice gadget $L_i$. 
A satisfying spin-assignment will be interpreted as setting
  $x_i$ to \textsc{True} if $\calC_{L_i}$ ends up being monochromatic,
  and \textsc{False} otherwise.
The key functionality that we will show a choice
  gadget provides is that
  it has a unique up to duality satisfying spin-assignment 
  where the variable cycle is monochromatic 
  (respectively, non-monochromatic). 
Furthermore, choice gadgets will also be used as auxiliary building
  blocks in the construction of another type of gadget we will soon
  encounter.

\begin{figure}[h]
\centering
\ifpdf\input{choice-gadget.pdf_t}\else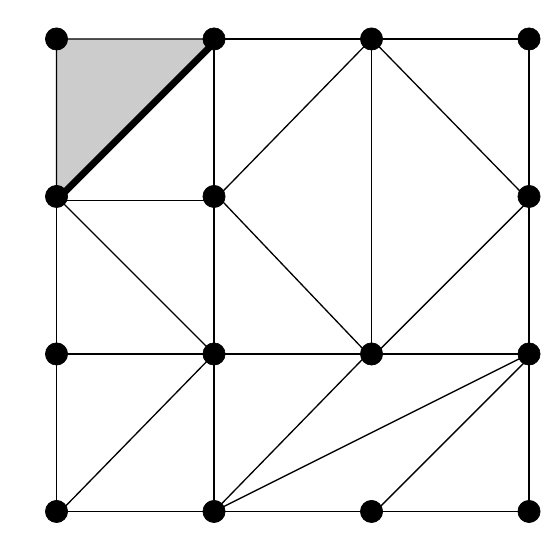\fi
\caption{Choice gadget (region in gray depicts a surface hole).}
\label{fig:choice_gadget}
\end{figure}

The following result encapsulates 
  the most relevant properties of choice gadgets.
\begin{proposition}\label{prop:choice_properties}
Let $L$ be a choice gadget.
The fundamental edge of the 
  choice gadget is serious.
Moreover, there exists a unique up to duality 
  feasible spin-assignment to $L$ where the variable cycle 
  $\calC_{L}$ of $L$ is monochromatic 
  (respectively, non-monochromatic).
\end{proposition}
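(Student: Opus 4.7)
The plan is to proceed by an exhaustive case analysis on the spins assigned by a putative satisfying spin-assignment $s$ to the three vertices of the variable cycle $\calC_L$, and in each case to propagate the frustration constraint (no face can be monochromatic) through the triangulation vertex by vertex until every spin of $L$ is determined or a contradiction is reached. By duality, I may assume $s(u) = +1$, which reduces the number of configurations to handle by a factor of two.

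First I would argue that the fundamental edge $uw$ is serious. For this, I consider the configurations with $s(u) = +1$ and $s(w) = -1$ (that is, $uw$ non-monochromatic), and split into the two sub-cases $s(v) = +1$ and $s(v) = -1$. In each sub-case I walk around the triangles incident to $\calC_L$ in the order given by the rotation system of Figure~\ref{fig:choice_gadget}: whenever a triangle has two already-determined endpoints of equal spin, the third vertex is forced to the opposite spin; whenever the two determined endpoints differ, I record both possible extensions and continue propagating down each branch. Because the gadget is a triangulated torus with a single hole, after finitely many steps every branch must either assign two spins to some already-visited vertex (a contradiction) or complete the assignment. The goal of this step is to verify that every branch in the case $s(u) \neq s(w)$ terminates in a contradiction, so that $uw$ is forced to be monochromatic under any satisfying spin-assignment.

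Next, to establish the existence and uniqueness up to duality in the two regimes of the proposition, I run the same propagation starting from the two remaining configurations with $s(u) = +1$: namely $s(v) = s(w) = +1$ (monochromatic $\calC_L$) and $s(v) = -1$, $s(w) = +1$ (non-monochromatic $\calC_L$, consistent with $uw$ being serious). In each of these cases, I expect the propagation to be entirely deterministic — every triangle encountered will have two endpoints of equal spin at the moment its third vertex is reached, thereby uniquely forcing that third spin — and to close up consistently around the toroidal structure. The output of each deterministic propagation is the desired unique (up to the global sign flip coming from duality) satisfying spin-assignment, simultaneously yielding both existence and uniqueness.

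The main obstacle is organizational rather than conceptual: the choice gadget contains enough vertices that the propagation produces a nontrivial sequence of forced spins, and the torus embedding introduces nonplanar closure conditions that must each be checked for consistency. To keep the analysis manageable I would label the vertices of $L$ in the order in which they are reached by a breadth-first traversal from $\calC_L$, tabulate the forced spin at each step together with the triangle that forces it, and then verify separately that every remaining face of $L$ (i.e., those not used to force any spin) is frustrated by the resulting assignment. Duality is invoked throughout to halve the case work, and any symmetry between the roles of $u$, $v$, $w$ coming from the specific rotation system of $L$ can be exploited to further prune the case analysis.
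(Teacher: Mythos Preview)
Your approach is essentially the paper's own: exhaustive case analysis on the spins of $\calC_L$, reduced by duality, with face-by-face constraint propagation. One correction to your stated expectation: in the non-monochromatic case ($u,w,v$ assigned $+,+,-$) the propagation is \emph{not} entirely deterministic --- the paper's proof shows that a genuine branch arises at the vertex $u'$, with one choice yielding the unique satisfying extension and the other forcing a monochromatic face; since your branching framework from the seriousness argument already handles this situation, the method still goes through unchanged.
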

\begin{proof} 
To prove the first statement, by duality, 
  it suffices to show that there is no feasible spin-assignment
  extension to $L$ when 
  node $v$ (node labels as in Figure~\ref{fig:choice_gadget}) 
  is assigned spin $+1$ and 
  the fundamental edge $uw$ is assigned spins $+-$ or $-+$.
In Figure~\ref{serious_edge_choice_1} we work out the case
  where $uw$ is assigned spins $+-$; a subindex $i$  
  accompanying a $+$ or $-$ sign indicates that the spin is
  forced by the spin-assignments with smaller indices in order
  for the assignment to be satisfiable --- if spins assigned
  to the vertices of a triangle are forced to be all of the same
  sign, then no satisfying assignment can exist under the given
  initial conditions. 
The case when $uw$ is assigned spins $-+$ 
  is dealt with in the same way and worked out in 
  Figure~\ref{serious_edge_choice_2}.
This establishes that the fundamental edge of $L$ is serious.

\begin{figure}[h]
\centering
\subfigure[Assignment forced by fixing $uw$ to $+-$.
  Forced monochromatic triangular faces are labeled by 
  $\rightarrow\leftarrow$.]
{
\ifpdf\input{choice_serious_1.pdf_t}\else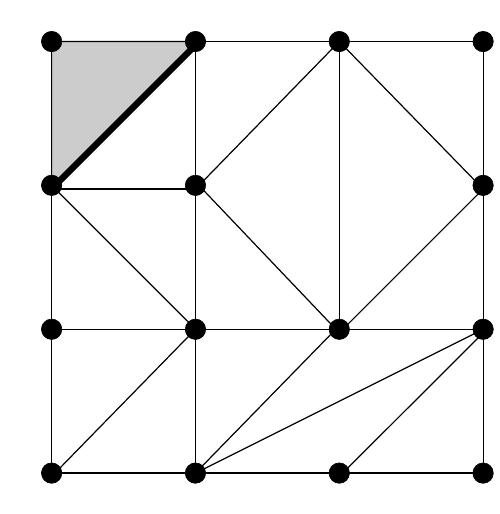\fi
  \label{serious_edge_choice_1}
}
\hspace{2em}
\subfigure[Assignment forced by fixing $uw$ to $-+$.
  Forced monochromatic triangular faces are labeled by 
  $\rightarrow\leftarrow$.]
  {\ifpdf\input{choice_serious_2.pdf_t}\else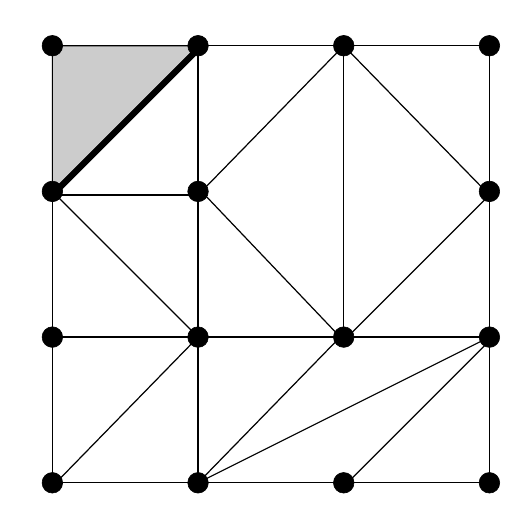\fi
  \label{serious_edge_choice_2}
  }
\caption{\mbox{}}
\end{figure}

We now establish the claimed existence and uniqueness. 
Since the fundamental edge of $L$ is serious,
  if $s$ is a satisfying spin-assignment 
  to $L$, then $L$'s fundamental edge is monochromatic
  under $s$.
Therefore, again by duality, it is enough to prove that in the 
  following two cases there exist exactly 
  one feasible spin-assignment extension:
  (a) when the variable cycle 
  $\calC_{L}$ (i.e.~$uwv$) of~$L$ 
  is assigned spin $+++$ (the monochromatic case),  
  and (b) when it is assigned $++-$ (the non-monochromatic case). 
In Figure~\ref{fig:choice_unique_mono}, the unique 
  satisfying spin-assignment to $L$ when
  nodes of its variable cycle are assigned $+1$ is exhibited.

\begin{figure}[h]
\centering
\subfigure[Unique 
  satisfying spin-assignment forced by the spin-assignment 
  $+++$ to the variable cycle of $L$.]{
  \ifpdf\input{choice_unique_mono.pdf_t}\else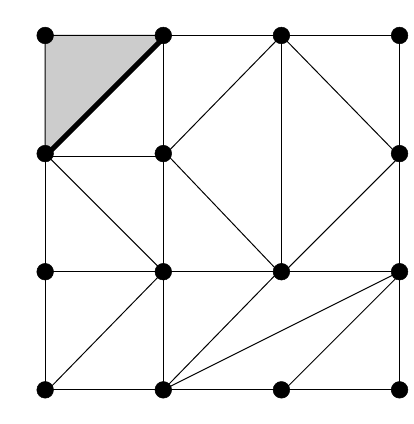\fi
  \label{fig:choice_unique_mono}}
\hspace{2em}
\subfigure[Unique satisfying spin-assignment to~$L$,
  up to duality, when $L$'s variable cycle is non-monochromatic.]{
  \ifpdf\input{choice_unique_nonmono_1.pdf_t}\else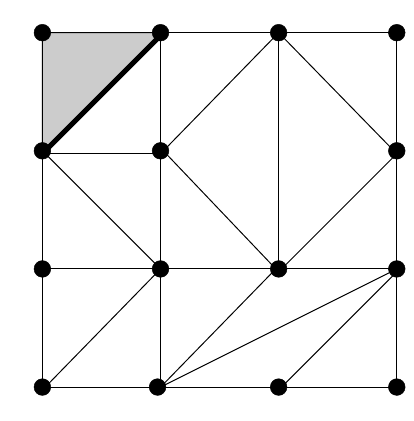\fi
  \label{fig:choice_unique_nonmono_1}
}
\hspace{2em}
\subfigure[Spin assignments forced by the assignment of spin~$-1$
  to $u'$ (forced monochromatic faces are labeled by 
  $\rightarrow\leftarrow$).]{
  \ifpdf\input{choice_unique_nonmono_2.pdf_t}\else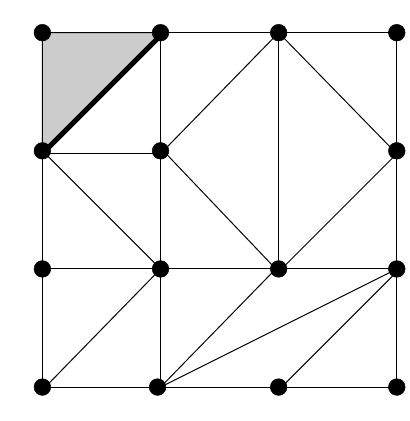\fi
  \label{fig:choice_unique_nonmono_2}
}
\caption{}
\end{figure}
For the non-monochromatic case, by duality and since
  $L$'s fundamental edge is serious, it suffices to consider
  the situation where $L$'s fundamental edge is assigned
  spins $++$.
Two subcases arise, depending on whether or not the spin $+1$
  is assigned to node $u'$ (node labels as in Figure~\ref{fig:choice_gadget}) 
  --- each subcase is 
  worked out separately in Figures~\ref{fig:choice_unique_nonmono_1} 
  and~\ref{fig:choice_unique_nonmono_2}.
\end{proof}

\subsection{Replicator gadget}
A variable may appear several times
  in different clauses (or even multiple times in the same clause)
  of a Boolean formula in three conjunctive normal form.
Given that the truth value a variable, say $x_i$,
  will be unambiguously set by the values taken by
  a satisfying spin-assignment at the associated choice gadget, say $L_i$,
  we require a way of ``replicating'' the encoding of 
  the truth value of $x_i$ as many times as 
  $x_i$ appears in the collection of formula clauses.
To achieve this goal, to every choice gadget we will connect
  a special gadget, namely a \emph{$k$-replicator gadget}. 
When the value $k$ is clear from context or is not relevant,  
  we will just write replicator gadget instead of $k$-replicator gadget.
For each $k>0$, a $k$-replicator gadget will be a triangulation
  embedded in a surface of large genus (depending on $k$) 
  with $2^k +1$ holes.
The purpose of a $k$-replicator gadget is to
  generate $2^k$ copies of the truth value encoded by a satisfying
  spin-assignment to the choice gadget to which the replicator gadget is 
  connected.

To construct a $k$-replicator gadget 
  we will glue together (in a particular way soon to be discussed) 
  $2^{k}-1$ so called block-replicator gadgets.
A \emph{block-replicator gadget} is a triangulation~$R$ of an
  orientable closed surface of genus $4$ with three holes.
A block-replicator gadget is built by gluing together three choice gadgets
  and the surface triangulation of the torus with six removed triangles
  depicted in Figure~\ref{fig:replicator-gadget}.\footnote{Here, choice gadgets
  are used as auxiliary gadgets. This auxiliary gadgets will not be 
  associated to Boolean variables. The reason why we rely on this auxiliary
  choice gadgets is solely because of one of the properties we have shown 
  they exhibit. Specifically, the fact that fundamental edges of choice
  gadgets are serious.}
Specifically, the construction 
  takes three choice gadgets, say $\hat{L}$, $\bar{L}$ and 
  $\tilde{L}$ as depicted in Figure~\ref{fig:choice_gadget}, and
  identifies the variable cycle $\calC_{\hat{L}}$
  (respectively, $\calC_{\bar{L}}$ and $\calC_{\tilde{L}}$ ) 
  of the choice gadget $\hat{L}$ 
  (respectively, $\bar{L}$ and $\tilde{L}$)
  with the cycle $\hat{\calC}=u'xu$ (respectively,  
  $\bar{\calC}=xyw$ and $\tilde{\calC}=zw'v$) of the 
  block-replicator gadget $R$ as depicted in Figure~\ref{fig:replicator-gadget}.
The identification is done in such a way 
  that edge $u'x$ (respectively, $xy$ and $zw'$) 
  of the cycle $\hat{\calC}$ (respectively,
  $\bar{\calC}$ and $\tilde{\calC}$) coincides with the 
  fundamental edge of the choice gadget $\hat{L}$ 
  (respectively, $\bar{L}$ and  $\tilde{L}$).
\begin{figure}[h]
\centering
\ifpdf\input{replicator_sharpP.pdf_t}\else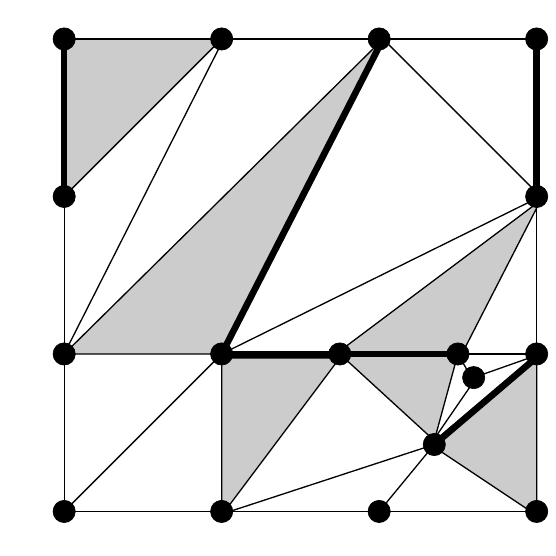\fi
\caption{Block-replicator gadget.}\label{fig:replicator-gadget}
\end{figure}
Clearly, under this construction, each block-replicator
  gadget has exactly three holes, because after gluing 
  the choice gadgets to the surface triangulation depicted in
  Figure~\ref{fig:replicator-gadget}, 
  the holes circumscribed
  by the cycles $\hat{\calC}$, $\bar{\calC}$ and
  $\tilde{\calC}$ disappear. 
The length $3$ cycle $uvw$ circumscribing one of $R$'s hole
  is referred to as the 
  \emph{incoming cycle} (node labels as in Figure~\ref{fig:replicator-gadget}).
The length $3$ cycles circumscribing the other two holes 
  of $R$ will be called the
  \emph{outgoing cycles}.
Moreover, edges $vw$, $v'u'$ and $xy$ will be referred to as 
  \emph{fundamental edges} of
  $R$ (see again Figure~\ref{fig:replicator-gadget} for node labeling scheme).

The attentive reader might wonder whether the described block-replicator
  gadget is indeed a surface triangulation. 
Specifically, whether indeed every ``surface'' point has an open neighborhood 
  homeomorphic to some open subset of the Euclidean plane.
This is indeed the case.
Moreover, a block-replicator gadget has 
  the following key property, henceforth referred to as 
  \emph{intersection property}: 
  the incoming and outgoing cycles of a block-replicator gadget 
  do not share vertices. 
The intersection property implies that the surface on which the 
  block-replicator gadget is embedded can be smoothly deformed into
  the one depicted in Figure~\ref{fig:block_replicator_diagram}.

\begin{figure}[h]
\centering
\ifpdf\input{block_replicator_diagram.pdf_t}\else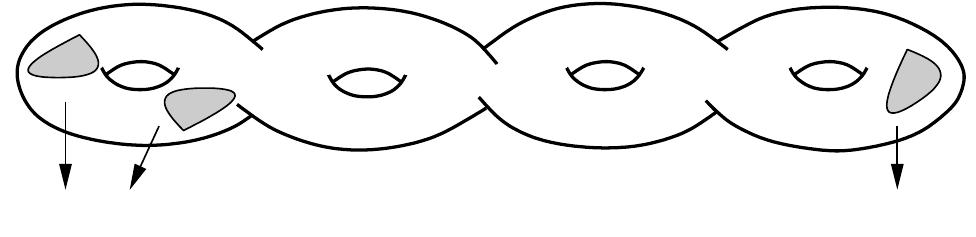\fi
\caption{Block-replicator gadget sketch.}\label{fig:block_replicator_diagram}
\end{figure}
  
The key purpose of each block-replicator gadget is to enforce 
  that if the incoming cycle is monochromatic 
  (respectively, non-monochromatic), 
  then both outgoing cycles will be monochromatic 
  (respectively, non-monochromatic). 
Moreover, we will see that the 
  block-replicator gadget inverts the sign of the incoming
  cycle, namely if the incoming cycle is positive 
  (respectively, negative) the outgoing cycles are
  negative (respectively, positive). Actually, 
  much more is true. 
Formally, we have the following 
  results concerning block-replicator gadgets.
  
\begin{proposition}\label{prop:serious_edges_replicator}
Let $R$ be a block replicator gadget. 
  Fundamental edges of $R$ are 
  serious. 
In particular,   in every satisfying spin-assignment to $R$ both 
  outgoing cycles have the same sign
  and opposite to the sign of the incoming cycle.
\end{proposition}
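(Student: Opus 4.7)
The plan is to leverage Proposition~\ref{prop:choice_properties} together with a face-by-face forcing argument, in direct analogy with the proof of that proposition. By the construction of $R$, the edges $u'x$, $xy$ and $zw'$ are precisely the fundamental edges of the three auxiliary choice gadgets $\hat{L}$, $\bar{L}$ and $\tilde{L}$ that were glued into $R$, so Proposition~\ref{prop:choice_properties} immediately makes each of them serious in $R$. In particular, $xy$ is already one of the fundamental edges of $R$, so seriousness of $xy$ is free.

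For the two remaining fundamental edges $vw$ and $v'u'$, I would invoke duality and fix, without loss of generality, the spin at a conveniently chosen vertex --- say $s(x)=+1$, so that seriousness of $u'x$ and $xy$ forces $s(u')=s(y)=+1$ as well. Starting from this partial assignment, I would then propagate spins through the toroidal portion of the gadget shown in Figure~\ref{fig:replicator-gadget}, repeatedly using that every triangular face must be non-monochromatic. Whenever a choice appears (for instance, picking the spin of the third vertex of an already partially assigned triangle), I would branch into subcases in the exact style of Figures~\ref{serious_edge_choice_1}--\ref{serious_edge_choice_2}; branches in which some triangle is forced to be monochromatic are discarded as unsatisfiable, and the surviving branches pin down the spins at $v$, $w$, $v'$ and $u'$ uniquely up to duality. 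Seriousness of $vw$ and $v'u'$ can then be read off from the resulting spin patterns.

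The second assertion --- that in every satisfying spin-assignment both outgoing cycles share one sign, opposite to that of the incoming cycle --- drops out of the same case analysis. Once all fundamental edges are known to be serious, the sign of the incoming cycle $uvw$ is determined by $s(u)$ alone (since $s(v)=s(w)$), and similarly each outgoing cycle is determined by a single spin. Tracking the forcing chain from the incoming cycle across the block to each outgoing cycle then fixes these spins deterministically, and the claimed sign-reversal between incoming and outgoing cycles is precisely what the chain produces.

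The main obstacle I anticipate is book-keeping rather than conceptual: $R$ carries substantially more vertices and faces than a single choice gadget, so one must keep careful track of which triangles have been checked and which remain. However, the intersection property --- the incoming and outgoing cycles share no vertices --- ensures that forcing chains emanating from one boundary cycle do not loop back on themselves through another, so each branch of the case analysis terminates cleanly and the argument remains entirely mechanical.
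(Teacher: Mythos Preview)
Your plan is correct and closely parallels the paper's, with one tactical difference worth noting. The paper also obtains seriousness of $xy$ (and implicitly of $u'x$ and $zw'$) directly from Proposition~\ref{prop:choice_properties}, but for $vw$ and $v'u'$ it takes the direct contradiction route rather than your enumeration: it fixes the target edge to be non-monochromatic (say $s(v)=+1$, $s(w)=-1$) and, for each choice of the remaining boundary vertex $u$, runs the forcing chain until a monochromatic face is produced; likewise for $v'u'$ with the third vertex $w'$. Your enumeration starting from $s(x)=+1$ is equally valid and has the pleasant side effect of essentially carrying out the case analysis of Proposition~\ref{prop:replicator} at the same time, at the cost of tracking more branches than the paper's targeted per-edge contradictions.

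One wording slip to fix: once you set $s(x)=+1$ you have already broken duality, and there will remain at least two surviving satisfying assignments (one with the incoming cycle monochromatic, one not), so the individual spins at $v,w,v',u'$ are \emph{not} pinned down uniquely. What you actually need to verify is that $s(v)=s(w)$ and $s(v')=s(u')$ hold in \emph{every} surviving branch---that is exactly the content of seriousness---and then the sign-reversal claim follows by reading off those common values in each branch.
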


\begin{proof} 
Given that $xy$ is the fundamental edge 
  of the variable cycle $\calC_{\bar{L}}=xyw$ of the choice gadget $\bar{L}$,
  by Proposition~\ref{prop:choice_properties}, we
  have that $xy$ is serious.
To prove that $vw$ is serious, by duality, it suffices
  to show that when $vw$ is assigned
  $+-$ and $u$ is assigned $+$ and $-$, there is no 
  feasible spin-assignment extension to $R$. 
These two situations are worked out in Figure~\ref{replicator_serious_1}.

\begin{figure}[h]
\centering
\subfigure[\mbox{}]{
  \ifpdf\input{replicator_serious_1.pdf_t}\else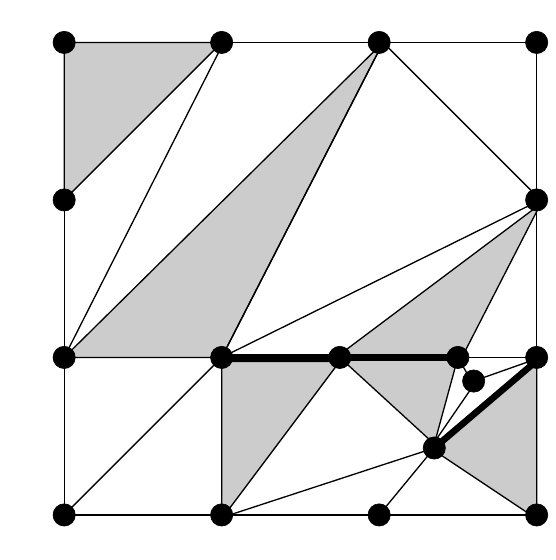\fi
}
\hspace{2em}
\subfigure[\mbox{}]{
  \ifpdf\input{replicator_serious_2.pdf_t}\else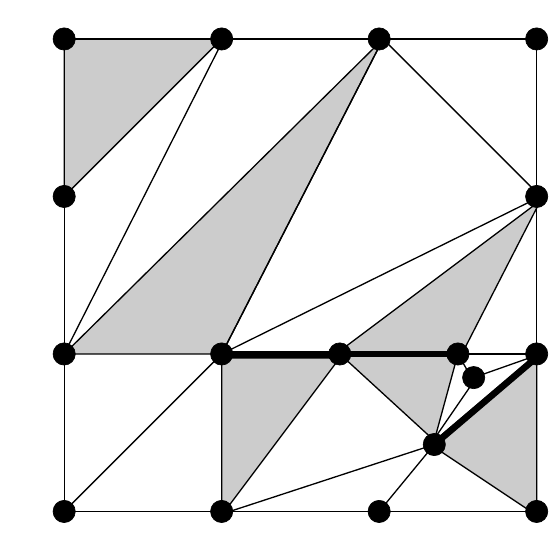\fi
} 
\caption{In (a),  
  satisfying spin-assignment forced by fixing
  the outgoing cycle $uvw'$ to $-++$ and in (b) to $-+-$. 
  Serious edges are shown as thick lines.
    Note that fixing the spin of an end of a serious edge immediately
    forces the spin of its other end.}
\label{replicator_serious_1} 
\end{figure}

\begin{figure}[h]
\centering
\subfigure[\mbox{}]{
  \ifpdf\input{replicator_serious_3.pdf_t}\else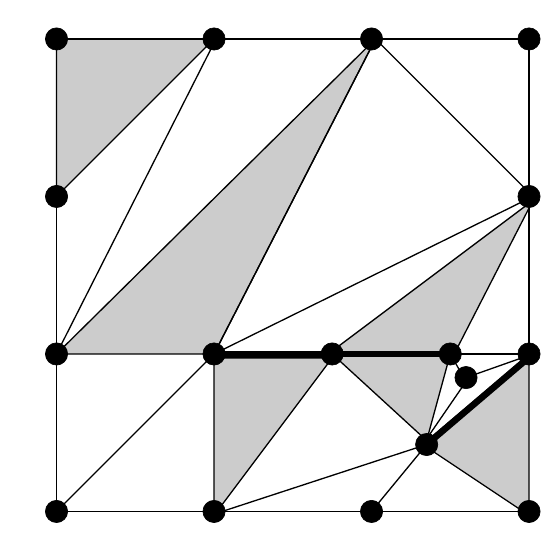\fi
}
\hspace{2em}
\subfigure[\mbox{}]{
  \ifpdf\input{replicator_serious_4.pdf_t}\else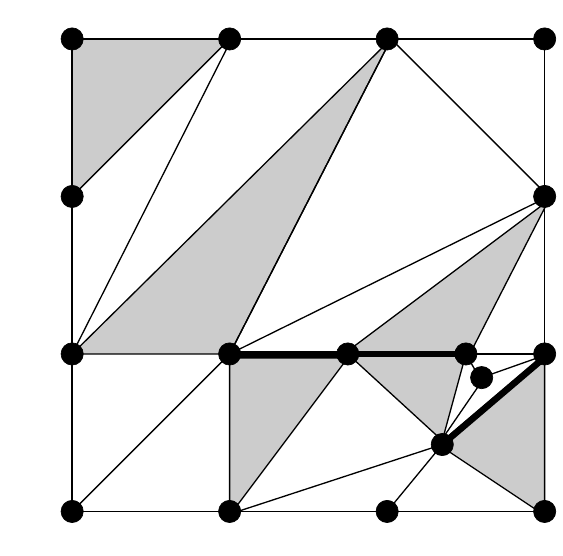\fi
} 
\caption{In (a),  
  satisfying spin-assignment forced by fixing
  the outgoing cycle $u'v'w'$ to $-++$ and in (b) to $-+-$. 
  Serious edges are shown as thick lines.
    Note that fixing the spin of an end of a serious edge immediately
    forces the spin of its other end.}
\label{replicator_serious_2} 
\end{figure}

We proceed as above to prove now that $u'v'$ is serious. 
In Figure~\ref{replicator_serious_2}, we show that in the cases
  where $u'v'$ is assigned $-+$ and $w'$ is assigned $+$ and $-$ there is
  no feasible spin-assignment extension to $R$.
\end{proof}

\begin{proposition}\label{prop:replicator}
Let $R$ be a block-replicator gadget. 
  Then, there exists a unique up to duality satisfying 
  spin-assignments to $R$ for which
  the incoming cycle is monochromatic (respectively, non-monochromatic).
Moreover, if $s$ a satisfying spin-assignment
  to $R$, one of the two following statements hold:
\begin{enumerate}
 \item the incoming and outgoing cycles are all monochromatic
  with the incoming cycle positive 
  (respectively, negative) and both outgoing
  cycles negative (respectively, positive), or
 \item the incoming and outgoing cycles are all non-monochromatic
 with the incoming cycle positive 
  (respectively, negative) and both outgoing
  cycles negative (respectively, positive).
\end{enumerate}
\end{proposition}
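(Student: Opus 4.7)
My plan is to establish the two statements simultaneously by reducing to a small number of initial configurations and propagating forced spins, leveraging the already-proven seriousness of the fundamental edges. By Proposition~\ref{prop:serious_edges_replicator} the fundamental edges $vw$, $v'u'$, and $xy$ are monochromatic under every satisfying spin-assignment, and both outgoing cycles have the same sign, opposite to that of the incoming cycle. Using duality, it then suffices to analyze the situation in which the serious edge $vw$ is assigned spins $++$, so that the incoming cycle $uvw$ is positive; by Proposition~\ref{prop:serious_edges_replicator} the outgoing cycles are both negative, and it only remains to verify, in a unique way up to duality, that these cycles are monochromatic together with $uvw$ or non-monochromatic together with it.

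I would then split into two cases depending on the spin assigned to $u$. In the monochromatic case $u$ is $+1$, so $uvw$ is $+++$; in the non-monochromatic case $u$ is $-1$, so $uvw$ is $-++$. In each case I would propagate the forced spins around the torus-with-six-holes portion of $R$ depicted in Figure~\ref{fig:replicator-gadget}, using at each step that (a)~any triangular face of $R$ must be frustrated, (b)~the serious edges $u'v'$ and $xy$ force one spin from the other, and (c)~the three auxiliary choice gadgets glued along $\hat{\calC}$, $\bar{\calC}$, $\tilde{\calC}$ can be treated as black boxes: by Proposition~\ref{prop:choice_properties}, once the spin-assignment of one of these cycles is fixed, the corresponding choice gadget either admits a unique satisfying extension (when the cycle ends up monochromatic or non-monochromatic consistently with the choice gadget's options) or admits none.

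The consistency check I would then carry out is that in each of the two cases the forced pattern on the core of $R$ does not produce any monochromatic triangular face and assigns a definite (negative) sign to both outgoing cycles, monochromatic in the first case and non-monochromatic in the second. Since the fundamental edge $u'v'$ of one outgoing cycle and the edge of the other outgoing cycle incident to $v$ (which lies on the serious edge $vw$ or on an edge forced monochromatic by the propagation) fix two of the three spins on each outgoing cycle, the remaining vertex of each outgoing cycle is uniquely determined; together with Proposition~\ref{prop:serious_edges_replicator}, this yields the claimed dichotomy between items~(i) and~(ii) and the uniqueness up to duality.

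The main obstacle will be the propagation itself: $R$ contains many vertices and three attached choice gadgets, so tracing which spins are forced, and in which order, is intricate. I would manage this by producing one reference figure per case, analogous to Figures~\ref{serious_edge_choice_1}--\ref{fig:choice_unique_nonmono_2} and~\ref{replicator_serious_1}--\ref{replicator_serious_2}, in which indices on the $+/-$ signs record the order of forcing, and by absorbing the three auxiliary choice gadgets through Proposition~\ref{prop:choice_properties} so that the genuinely new verification is confined to the core toroidal region of~$R$.
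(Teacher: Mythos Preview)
Your approach is essentially identical to the paper's: reduce via duality and the seriousness of $vw$ to the two configurations $uvw\in\{+++,\,-++\}$, propagate forced spins through the core torus (recorded in two indexed figures), and absorb the three auxiliary choice gadgets via Proposition~\ref{prop:choice_properties}. One small slip: by the intersection property the outgoing cycles share no vertex with $uvw$, so neither has an edge incident to~$v$; the second outgoing cycle's spins are pinned down purely by the propagation in the core, not by the serious edge~$vw$ directly.
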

\begin{proof} 
To prove existence and uniqueness of the 
  satisfying spin-assignment when the 
  incoming cycle is monochromatic, 
  by duality, it is enough to prove that if 
  all nodes in the incoming cycle 
  $uvw$ are assigned $+1$, then there exists a 
  unique feasible spin-assignment extension to $R$  
  (node labels as in Figure~\ref{fig:replicator-gadget}). 
This situation is
  worked out in Figure~\ref{fig:replicator-mono}.
On the other hand, by Proposition~\ref{prop:serious_edges_replicator}, if $s$ is
  a satisfying spin-assignment to $R$, the edge $vw$ belonging to the
  incoming cycle is monochromatic (because it is serious). 
Thus, by duality, we can assume that $s$ assigns 
  to the incoming cycle $uvw$  spins $+++$ or $-++$. 
Therefore, to establish
  existence and uniqueness of the satisfying spin-assignment
  when the incoming cycle is non-monochromatic, 
  it will suffice to 
  show that there exists unique satisfying spin-assignment extension
  to $R$ when the incoming cycle $uvw$ is assigned spin $-++$.
This case is studied in Figure~\ref{fig:replicator-non-mono}.
We need to check that each of the spin-assignments depicted
  in Figures~\ref{fig:replicator-mono} and~\ref{fig:replicator-non-mono}
  have a unique extension to the block-replicator gadget, even when 
  the auxiliary choice gadgets are glued to the block-replicator
  gadget via proper identification of 
  $\hat{\calC}, \bar{\calC}, \tilde{\calC}$ 
  (see labels in Figure~\ref{fig:block_replicator_diagram})
  and the variable cycles of the auxiliary choice gadgets.
Proposition~\ref{prop:choice_properties} and the fact that 
  the spin-assignments depicted in Figures~\ref{fig:replicator-mono} 
  and~\ref{fig:replicator-non-mono} completely determine the spins
  of the nodes of   $\hat{\calC}, \bar{\calC}, \tilde{\calC}$ imply
  that the spin-assignment extensions to the whole 
  block-replicator gadget are indeed feasible and unique.

\begin{figure}[h]
\centering
\subfigure[Case where the incoming cycle is positive
           and monochromatic. 
	   Note that the outgoing cycles are forced
	   to be negative and monochromatic.]{
  \ifpdf\input{repl_mono_mono.pdf_t}\else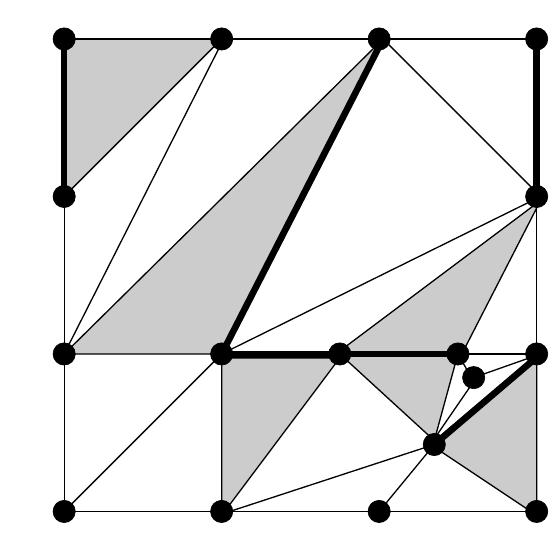\fi
  \label{fig:replicator-mono}}
\hspace{2em}
\subfigure[Case where the incoming cycle is positive
           and non-monochromatic. 
	   Note that outgoing cycles are forced
	   to be negative and non-monochromatic.]{
  \ifpdf\input{repl_nomono_nonmono.pdf_t}\else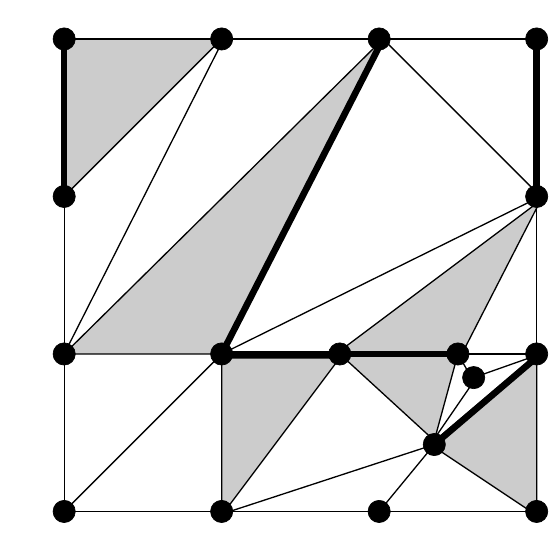\fi
  \label{fig:replicator-non-mono}
} 
\caption{\mbox{}} 
\end{figure}

The remaining part of the claimed result can be ascertained
  by inspecting in Figures~\ref{fig:replicator-mono} 
  and~\ref{fig:replicator-non-mono}
  the satisfying spin-assignments
  forced by the (non) monochromaticity of the incoming 
  cycles.
\end{proof}

We are now ready to describe the construction of 
  a $k$-replicator gadget.
Take $2^k - 1$ block-replicator gadgets $R_1, R_2, R_3, \ldots, R_{2^k -1}$.
Identify the outgoing cycles of $R_1$
  with the incoming cycles of 
  the block replicator gadgets $R_2$
  and $R_3$ so that the fundamental edges of $R_1$ 
  that belong to the outgoing cycles and 
  the fundamental edges of $R_2$ and $R_3$ that 
  belong to the incoming cycles coincide.
Continue in this way piecing together new block 
  replicator gadgets and identifying fundamental edges, and
  construct a ``rooted binary tree of depth $k$'' type structure  
  of block-replicator gadgets. 
Let $R_1$ denote the block-replicator gadget at the 
  ``root'' of the tree, and let 
  $R_{2^{k-1}}, R_{2^{k-1} +1}, \ldots, R_{2^{k} -1}$
  denote the block-replicator gadgets at the ``leaves'' of the tree.
The incoming cycle of $R_1$ will be referred to as the
  \emph{starting cycle} of the $k$-replicator gadget $R^k$
  and the outgoing cycles of the block replicator gadgets
  $R_{2^{k-1}}, R_{2^{k-1} +1}, \ldots, R_{2^{k} -1}$
  will be called \emph{end cycles} of $R^k$. 
Moreover, fundamental edges of the block-replicator gadget
  belonging to the incoming cycle of $R_1$ and
  to the outgoing cycles of 
  $R_{2^{k-1}}, R_{2^{k-1} +1}, \ldots, R_{2^{k} -1}$
  will be referred to as fundamental edges of
  the $k$-replicator gadget.

Note that each $k$-replicator
  gadget is a triangulation of an orientable closed
  surface of genus $4 \cdot (2^{k}-1)$ with $2^k +1$ holes.
Furthermore, the intersection property
  of the block-replicator gadgets is trivially transferred
  to $k$-replicator gadgets; namely, the starting and
  end cycles of a $k$-replicator gadget do not share vertices.

In our reduction, the starting cycle of each $k$-replicator gadget $R^k$
  will be identified with the variable cycle of a choice gadget, say $L$. 
By Proposition~\ref{prop:replicator}, this guarantees that
  the end cycles of $R^k$ will be monochromatic if and only if 
  the variable cycle of $L$ is monochromatic.
It is somewhat unfortunate 
  that the block-replicator gadgets generate, at its
  outgoing cycles, encodings of opposite signs as the one 
  of its incoming cycle.  
By taking $k$ even, we can guarantee that 
  each of the end cycles of $R^k$ will have the same 
  chromaticity (monochromatic or nonmonochromatic) 
  and sign as the variable cycle of the choice
  gadget $L$.
The following result captures all relevant properties 
  we will need that are satisfied
  by replicator gadgets.
The reader can easily check that the claimed properties 
  are immediately inherited from those satisfied by 
  block-replicator gadgets.
\begin{corollary}\label{coroll:replicatorgadget}
Let $k$ be a positive integer and let $R^k$ be a $k$-replicator gadget. 
The following statements hold:
\begin{enumerate}
\item \label{st1_rep} Fundamental edges of $R^k$ are serious.
\item \label{st2_rep} For any 
  satisfying spin-assignment to $R^k$, the starting cycle
  and the end cycles have the same sign.
\item \label{st3_rep} For any 
  satisfying spin-assignment to $R^k$, the starting cycle
  and the end cycles are all either monochromatic or non-monochromatic.
\item \label{st4_rep} There is a unique up duality satisfying
  spin-assignment to $R^k$ so that the starting cycle
  and the end cycles are all monochromatic (respectively, non-monochromatic).
\end{enumerate}
\end{corollary}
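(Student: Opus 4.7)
The plan is to prove the four items simultaneously by induction on $k$, exploiting the fact that $R^k$ is assembled as a rooted binary tree of $2^k-1$ block-replicator gadgets in which adjacent blocks share a single three-cycle (an outgoing cycle of a parent identified with the incoming cycle of a child). The base case $k=1$ is immediate: $R^1$ is a single block-replicator gadget and the four items reduce, respectively, to Proposition~\ref{prop:serious_edges_replicator} (for item 1) and Proposition~\ref{prop:replicator} (for items 2, 3 and 4).

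For the inductive step, I would view $R^k$ as the root block-replicator $R_1$ whose two outgoing cycles coincide with the starting cycles of two $(k-1)$-replicator subtrees. Item 1 is then handled directly: every fundamental edge of $R^k$ is, by construction, a fundamental edge of some block-replicator $R_i$, and the restriction of any satisfying spin-assignment to $R^k$ to $R_i$ is itself satisfying, so Proposition~\ref{prop:serious_edges_replicator} forces the edge to be monochromatic. Items 2 and 3 follow by propagating information along each root-to-leaf path of the tree: Proposition~\ref{prop:replicator} ensures that each block-replicator preserves chromaticity (monochromatic vs.\ non-monochromatic) and determines the sign of its outgoing cycles from that of its incoming cycle, so combining this with the inductive hypothesis applied to each of the two subtrees yields the claimed relationship between the starting cycle and the end cycles of $R^k$.

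For item 4, fix a spin-assignment (up to duality) on the starting cycle of $R^k$ with the prescribed chromaticity. Proposition~\ref{prop:replicator} produces a unique satisfying extension to $R_1$, which in turn determines the spins on the two outgoing cycles of $R_1$; these are identified with the starting cycles of the two $(k-1)$-replicator subtrees, so the inductive hypothesis supplies unique satisfying extensions on each subtree. These three extensions agree on the shared cycles by construction, so they assemble into the unique satisfying spin-assignment to $R^k$ compatible with the given initial data.

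The main obstacle I foresee is ensuring that the gluing in the inductive step is truly conflict-free. Concretely, one needs to check that whenever two block-replicators share a three-cycle, the unique (up to duality) extensions inside each block agree on that shared cycle. This reduces to verifying that the spins forced on the shared cycle by the block that contains it as an outgoing cycle coincide with those forced by the block that contains it as an incoming cycle: the intersection property prevents further vertex identifications from imposing extraneous constraints, and the fact that the fundamental edge lying on the shared cycle is serious on both sides (Proposition~\ref{prop:serious_edges_replicator}) eliminates any residual freedom. Once this compatibility across interfaces has been pinned down, the four assertions of the corollary follow by a routine unwinding of the induction.
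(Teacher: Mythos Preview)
Your inductive scheme is exactly what the paper has in mind (the paper offers no formal argument, saying only that the properties are ``immediately inherited'' from those of block-replicator gadgets), and items~\ref{st1_rep}, \ref{st3_rep} and~\ref{st4_rep} go through as you describe.

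There is, however, a genuine problem with item~\ref{st2_rep}. Your base case $k=1$ does \emph{not} reduce to Proposition~\ref{prop:replicator}: that proposition (and already Proposition~\ref{prop:serious_edges_replicator}) asserts that the outgoing cycles of a block-replicator have sign \emph{opposite} to that of the incoming cycle, so $R^1$ in fact violates item~\ref{st2_rep} as written. Your inductive step inherits the same defect: if the $(k-1)$-subtrees preserve sign and the root block $R_1$ inverts it, then $R^k$ inverts sign, and the induction proves the negation of what you want. What the argument actually establishes is that the sign is flipped once per level, so the starting and end cycles of $R^k$ agree in sign precisely when $k$ is even. This is all the paper ever uses: the $k_i$ in the reduction of Section~\ref{sec:thereduction} are chosen to be even, and Lemma~\ref{lemm:allsamesign} explicitly invokes that parity when appealing to item~\ref{st2_rep}. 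So the flaw lies in the statement of the corollary rather than in your method; the fix is to restrict item~\ref{st2_rep} to even $k$, or to replace ``same sign'' by ``same sign if $k$ is even and opposite sign if $k$ is odd,'' and then your induction goes through verbatim.
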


As we have already mentioned, the starting cycle of a replicator gadget, 
  say $R$, 
  will be identified with a variable cycle of a choice gadget, say $L$. 
Assuming that $L$ is in turn associated to a formula variable, say $x$, 
  it follows that in any satisfying spin-assignment 
  all end cycles of~$R$ encode the same truth value of $x$
  encoded by the variable cycle of $L$.
Eventually, some end cycles of $R$
  will be identified with cycles of the (next to be described) 
  clause gadgets associated to formula clauses where $x$ appears.
If the total number of appearances of $x$ in formula clauses 
  is $t$, then $R$ will be a $k$-replicator gadget where  
  $k$ is the smallest positive even integer greater or equal than $\log_{2}t$.
Thus, after identifying end cycles of $R$ with cycles in clause
  gadgets, we might end up with non-identified end cycles (a
  situation that occurs whenever $\log_{2}t$ is not a positive even integer).
The holes circumscribed by such end cycles
  will need to be ``capped'' in order so at the end of our 
  reduction we do indeed generate a surface triangulation.
Moreover, holes will need to be ``capped'' in such a way that 
  the properties satisfied by replicator gadgets are preserved.
To achieve this goal, when necessary, 
  we will identify an end cycle   
  with the outer cycle of a 
  \emph{cap gadget} as depicted in Figure~\ref{fig:cap}.

\begin{figure}[h]
\centering
\ifpdf\input{cap.pdf_t}\else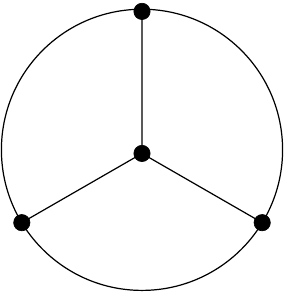\fi
\caption{A cap.}\label{fig:cap}
\end{figure}

The following statement is trivial.
\begin{proposition}\label{prop:cap}
For any spin-assignment to the outer cycle of a cap gadget, there exists
  a unique satisfying spin-assignment extension to the whole cap.
\end{proposition}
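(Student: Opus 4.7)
My plan is to establish the claim by a direct case analysis on the possible spin-assignments to the three vertices of the outer cycle of the cap. There are $2^3 = 8$ such assignments in total, but by duality it suffices to consider the four representatives obtained by fixing the spin of one distinguished vertex of the outer cycle to $+1$. These four representatives split into the monochromatic case $+++$ and three non-monochromatic cases, which, up to any rotational symmetry of the cap depicted in Figure~\ref{fig:cap}, collapse further.

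For each representative boundary assignment, I would propagate forced spins into the interior of the cap using the frustration requirement at every internal triangular face: once two vertices of a face are assigned the same spin, the third is forced to take the opposite spin. Because the cap is a simply-connected triangulated disk of small size, its faces can be swept in an order where each newly encountered face has at least two already-assigned vertices, so that the remaining vertex is either forced to a unique spin or produces a monochromatic face (a contradiction). This procedure immediately yields uniqueness of the extension whenever an extension exists, and reduces the entire proposition to the purely finite verification that no contradiction ever arises.

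The only substantive content is therefore existence, i.e.~that the propagation never forces a monochromatic internal face regardless of the outer spin-assignment. The cap of Figure~\ref{fig:cap} is built with precisely this property in mind; the delicate case is the monochromatic outer assignment $+++$, since a naive cap consisting of a single interior vertex joined to the three outer ones would immediately produce a monochromatic internal triangle. I expect the actual cap to have exactly the extra internal vertices needed so that $+++$ on the boundary propagates to an alternating arrangement on the interior, after which every remaining face is automatically non-monochromatic. I do not foresee any obstacle beyond this finite enumeration, which is why the authors describe the statement as trivial.
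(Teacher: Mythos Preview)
The paper gives no proof of this proposition at all; it simply declares the statement ``trivial'' immediately before stating it. Your plan---reduce by duality, then propagate forced spins face by face through the small triangulated disk---is the natural (and essentially only) way to verify such a finite claim, and it is entirely adequate.

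One small correction to your discussion: the ``naive cap'' consisting of a single interior vertex $a$ joined to the three boundary vertices actually \emph{does} admit a satisfying extension for the monochromatic boundary $+++$. The three internal faces are $uva$, $vwa$, $wua$; setting $a=-1$ makes each of them $++-$, hence frustrated. More generally, by pigeonhole any spin-assignment to a $3$-cycle has at least one monochromatic edge, so some internal face forces $a$ to the opposite spin, and one then checks the remaining two faces are automatically non-monochromatic. So the delicate case you flag is not in fact delicate, and your worry that extra interior vertices are needed to handle $+++$ is unfounded. This does not affect the soundness of your approach; it only means the verification is even shorter than you anticipated, which is consistent with the authors calling it trivial.

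A related imprecision in your uniqueness argument: when the two already-assigned vertices of a face carry \emph{opposite} spins, the third vertex is \emph{not} forced by that face alone. Uniqueness therefore does not follow merely from sweeping faces; it requires that at each step some face with a monochromatic assigned edge is available. For the cap this is guaranteed (as above, the boundary always contains a monochromatic edge), but your write-up should make this explicit rather than suggesting every face encountered automatically forces its third vertex.
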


\subsection{Clause gadget}
A \emph{clause gadget} is a
  toroidal triangulation with three 
  holes as depicted in Figure~\ref{fig:clause-gadget}.
The cycles circumscribing the holes of the 
  clause gadget will be called \emph{literal cycles}. 
Moreover, edges
  $uw$, $v'w$ and $v'u$ will be referred to as 
  \emph{fundamental edges} of the clause gadget 
  (depicted as thicker lines in Figure~\ref{fig:clause-gadget}).

\begin{figure}[h]
\centering
\ifpdf\input{clause-gadget_sharpP.pdf_t}\else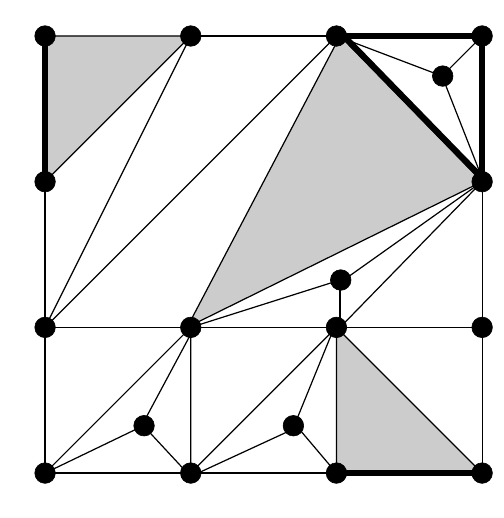\fi
\caption{Clause gadget.}\label{fig:clause-gadget}
\end{figure}
  
As already mentioned in the preceding section, 
  we will eventually identify end cycles of replicator 
  gadgets with literal cycles in such a way that
  fundamental edges of the replicator and
  clause gadgets coincide.
In our reduction, replicator gadgets will ``carry'' 
  from choice gadgets towards clause gadgets 
  the encodings of the truth values of formula variables. 
The clause gadget is built in such a way as to 
  allow a unique up to duality satisfying spin-assignment extension
  if and only if not all the truth value encodings
  ``arriving'' to the clause gadget represent the same truth value.

Unfortunately, fundamental edges of clause gadgets
  are not serious. 
However, once every literal cycle 
  of a clause gadget is identified with 
  an end cycle of a replicator
  gadget, fundamental edges of the clause
  gadget will become serious in the triangulation 
  thus formed --- 
  since fundamental edges of the replicator gadgets are serious,
  and because fundamental
  edges of the clause and replicator gadgets will
  be identified.
This explains why when stating the following claims
  we assume seriousness of fundamental edges
  of the clause gadgets.  
The functionality provided by a clause gadget is summarized by the 
  next results, the first of which is obvious.

\begin{proposition}\label{prop:serious_edges_clause}
Let $C$ be a clause gadget. Assume that fundamental 
edges of $C$ are serious. Then, 
for any satisfying spin-assignment,
all literal cycles of $C$ have the same sign.
\end{proposition}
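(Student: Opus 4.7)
The plan is to leverage the seriousness hypothesis on the three fundamental edges $uw$, $v'w$, $v'u$ to force a strong constraint on the spins at the vertices $u$, $v'$, $w$, and then read off the sign of each literal cycle directly.

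First I would observe that, because $uw$, $v'w$, $v'u$ are all serious, they are each monochromatic under any satisfying spin-assignment $s$ to $C$. These three edges form a triangle on the vertex set $\{u, v', w\}$, so the pairwise equalities $s(u)=s(w)$, $s(v')=s(w)$, $s(v')=s(u)$ force $s(u)=s(v')=s(w)$. By duality, it suffices to analyze the case $s(u)=s(v')=s(w)=+1$.

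Next I would invoke the geometry of the clause gadget shown in Figure~\ref{fig:clause-gadget}: each of the three literal cycles contains exactly one of the fundamental edges as one of its three edges (this distribution is exactly what allows identification of each literal cycle with an end cycle of a replicator gadget along their shared fundamental edge). Under the WLOG assumption, every fundamental edge has both of its endpoints assigned $+1$; hence every literal cycle has at least two of its three vertices assigned spin $+1$, which by definition makes every literal cycle positive. Therefore all three literal cycles have the same sign under $s$, and the duality reduction yields the claim in general.

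I do not expect a real obstacle here beyond verifying from the figure that each literal cycle indeed contains one of the fundamental edges as an edge; once that incidence is read off, the argument is essentially a one-line consequence of seriousness and the definition of the sign of a $3$-cycle. This is presumably why the authors flag the statement as obvious.
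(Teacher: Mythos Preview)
Your argument is correct and is exactly the intended one: seriousness of the three fundamental edges forces $s(u)=s(v')=s(w)$, and since each literal cycle ($uwv$, $v'wu'$, $v'uw'$) contains one of these fundamental edges, each has at least two vertices of that common spin and hence the same sign. The paper simply declares the result obvious without writing out a proof, and your write-up spells out precisely this reasoning.
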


\begin{proposition}\label{prop:clause-nonfeasible}
Let $C$ be a clause gadget. Assume that fundamental 
edges of $C$ are serious.
In the following cases there is no 
  satisfying spin-assignment extension to $C$:
\begin{enumerate}
\item \label{item:clause0} 
when all literal cycles of $C$ are monochromatic, and
\item \label{item:clause1} 
when all literal cycles of $C$ are non-monochromatic.
\end{enumerate}
\end{proposition}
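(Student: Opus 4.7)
The plan is to reduce the case analysis sharply via duality and Proposition~\ref{prop:serious_edges_clause}, and then, for each remaining case, exhibit, in the style of the proofs of Propositions~\ref{prop:choice_properties} and~\ref{prop:replicator}, a propagation of forced spins terminating in a triangular face whose three vertices are assigned a common spin, contradicting the existence of a satisfying extension.

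As a preliminary reduction, I would first observe that since the three fundamental edges $uw$, $v'w$, $v'u$ are assumed serious and pairwise share endpoints, they are all monochromatic under any satisfying~$s$, and the three vertices $u$, $v'$, $w$ receive a common spin. By duality we may assume $s(u)=s(v')=s(w)=+1$, and Proposition~\ref{prop:serious_edges_clause} then forces all three literal cycles of~$C$ to be positive.

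For part~\ref{item:clause0}, each literal cycle must be assigned $+++$, and I would propagate these spins through the triangulation of Figure~\ref{fig:clause-gadget}, exhibiting in a figure analogous to Figure~\ref{serious_edge_choice_1} the monochromatic triangular face thereby forced. For part~\ref{item:clause1}, each literal cycle must be non-monochromatic and positive, so on each of the three literal cycles we must choose which vertex carries the lone $-1$ spin; naively this gives up to $3\times 3\times 3=27$ subcases. The symmetry of the clause gadget among its three literal cycles (together with duality) collapses these to a small list of essentially distinct configurations, each handled by a figure showing deterministic propagation into a $\rightarrow\leftarrow$ face.

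I expect the main obstacle to be part~\ref{item:clause1}: even after exploiting symmetry, one must verify that in every remaining configuration the propagation from the literal cycles actually reaches a monochromatic face without admitting a consistent completion along the way. The fact that the anchor vertices $u$, $v'$, $w$ are already pinned to $+1$ is what makes the propagation from any fixed literal-cycle assignment deterministic, so the task reduces to a careful but finite graphical check read directly off the triangulation of Figure~\ref{fig:clause-gadget}.
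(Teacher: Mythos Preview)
Your approach is essentially that of the paper: reduce via duality and the seriousness of the fundamental edges to pin $u,v',w$ to $+1$, then propagate forced spins through Figure~\ref{fig:clause-gadget} until a monochromatic face appears. Part~\ref{item:clause0} goes exactly as you outline and as the paper does.

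For part~\ref{item:clause1}, though, the difficulty you anticipate is illusory, and your own preliminary reduction already dissolves it. Each literal cycle $uwv$, $v'wu'$, $v'uw'$ contains exactly one fundamental edge, and both endpoints of that edge lie in $\{u,v',w\}$, which you have already pinned to $+1$. Hence the third vertex of each literal cycle ($v$, $u'$, $w'$ respectively) is the only free one, and non-monochromaticity forces it to $-1$. There is no $3\times3\times3$ branching: the unique configuration is $uwv=++-$, $v'wu'=++-$, $v'uw'=++-$, and the paper disposes of part~\ref{item:clause1} with a single propagation figure (Figure~\ref{fig:all_equal_literals_b}), just as for part~\ref{item:clause0}. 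No symmetry reduction is needed.
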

\begin{proof} 
To prove the first claim,
  by duality and Proposition~\ref{prop:serious_edges_clause},
  it suffices to show that if all nodes in each literal
  cycle are assigned $+1$, then there is no feasible spin-assignment
  extension to $C$. 
This case is worked out in Figure~\ref{fig:all_equal_literals_a}.
\begin{figure}[ht]
\centering
\subfigure[Case where all literal cycles are monochromatic.]{
\ifpdf\input{clause_gadget_all_mono.pdf_t}\else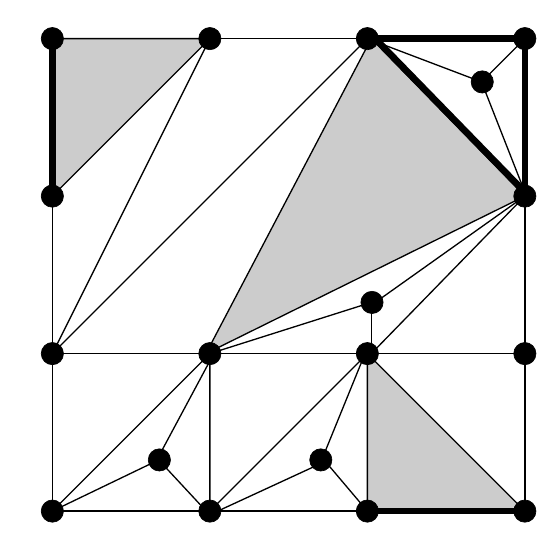\fi
\label{fig:all_equal_literals_a}
}
\hspace{2em}
\subfigure[Case where all literal cycles are non-monochromatic.]{
\ifpdf\input{clause_gadget_all_nonmono.pdf_t}\else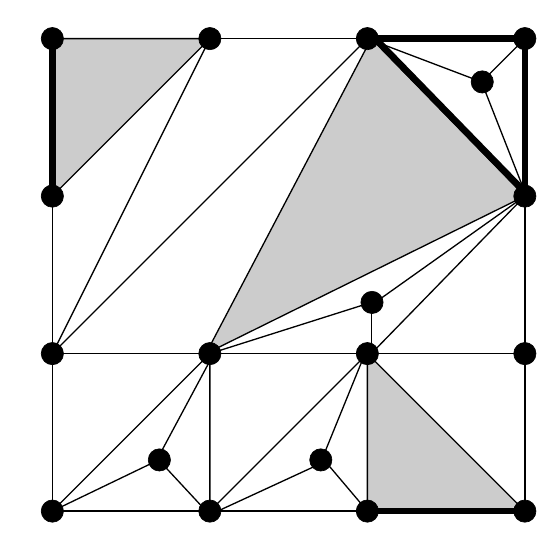\fi
\label{fig:all_equal_literals_b}
}
\caption{\mbox{}}
\label{fig:all_equal_literals}
\end{figure}

To establish the second claim, it suffices to show that the 
  same conclusion holds when 
  the literal cycles $uwv$,
  $v'wu'$ and $v'uw'$ are all assigned $++-$ 
  (node labels as in Figure~\ref{fig:clause-gadget}).
This case is worked out in Figure~\ref{fig:all_equal_literals_b}.
\end{proof}

\begin{proposition}\label{prop:clause-feasible}
Let $C$ be a clause gadget. 
Assume that the fundamental edges of $C$ are serious. 
If exactly one literal cycle of $C$ is monochromatic 
  (respectively, non-monochromatic)
  there is a unique up to duality satisfying spin-assignment extension to $C$.
\end{proposition}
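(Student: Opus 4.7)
The plan is to mimic the strategy used in Propositions~\ref{prop:choice_properties}, \ref{prop:replicator} and~\ref{prop:clause-nonfeasible}: combine duality with the seriousness of the fundamental edges to drastically reduce the number of configurations of the literal cycles that need to be considered, and then exhibit, by explicit spin-propagation on the triangulation of Figure~\ref{fig:clause-gadget}, a unique satisfying extension for each remaining case.

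More concretely, first I would invoke Proposition~\ref{prop:serious_edges_clause} to conclude that, under any satisfying spin-assignment to $C$, the three literal cycles $uwv$, $v'wu'$ and $v'uw'$ share a common sign. By duality I may assume this sign is positive. Next, seriousness of the fundamental edges $uw$, $v'w$ and $v'u$ forces $s(u)=s(w)$, $s(v')=s(w)$ and $s(v')=s(u)$, so the three vertices $u,w,v'$ carry a common spin; positivity of the literal cycles then pins this common spin to $+1$. Consequently, only the three remaining vertices $v,u',w'$ of the literal cycles can differ, and a literal cycle is monochromatic exactly when its distinguished free vertex carries spin $+1$ (namely $v$ for $uwv$, $u'$ for $v'wu'$, and $w'$ for $v'uw'$).

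Under the hypothesis that exactly one literal cycle is monochromatic, exactly one of $v,u',w'$ is $+1$ and the other two are $-1$; under the hypothesis that exactly one is non-monochromatic, exactly one of $v,u',w'$ is $-1$ and the other two are $+1$. The clause gadget of Figure~\ref{fig:clause-gadget} exhibits a visible $3$-fold symmetry permuting the triples $(u,v,w)\to(u',v',w')$ and cycling the three literal cycles, which reduces each of the two scenarios to a single representative case (say, $v=+1$, $u'=w'=-1$ for the monochromatic case, and $v=-1$, $u'=w'=+1$ for the non-monochromatic case). For each representative I would then propagate spins across the internal triangles of $C$ in the style of Figures~\ref{serious_edge_choice_1} and~\ref{fig:replicator-mono}, marking each newly forced vertex and verifying that no triangular face is ever driven to be monochromatic. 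Exhibiting one valid propagation yields existence of an extension, and the fact that every step is forced (each internal vertex lies on a triangle two of whose vertices already carry opposite spins by the time the vertex is processed) yields uniqueness.

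The main obstacle is bookkeeping rather than conceptual: one has to confirm that the propagation order actually determines every internal vertex unambiguously and that the symmetry argument is legitimate (i.e.~that the clause gadget really admits the claimed $3$-fold symmetry compatible with the labeling of fundamental edges), as opposed to accidentally collapsing distinct cases. If the symmetry turns out to be only partial, I would fall back to handling the three instances of each scenario separately in figures analogous to Figures~\ref{fig:all_equal_literals_a} and~\ref{fig:all_equal_literals_b}. In either case, the propositions already proved guarantee that no extra compatibility condition with the surrounding replicator gadgets is needed, since the spins of the literal-cycle vertices are completely determined on the boundary of $C$.
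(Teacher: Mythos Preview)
Your overall strategy---reduce via duality and Proposition~\ref{prop:serious_edges_clause} to the positive case, use seriousness of $uw,v'w,v'u$ to pin $s(u)=s(w)=s(v')=+1$, and then propagate spins through the internal triangles---is exactly what the paper does. The paper, however, does \emph{not} invoke any $3$-fold symmetry: it treats the three choices of which literal cycle is monochromatic (and likewise the three choices of which is non-monochromatic) as six separate cases, each worked out in its own figure (Figures~\ref{fig:onetrue1}--\ref{fig:onetrue3} and~\ref{fig:onefalse1}--\ref{fig:onefalse3}).

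Your symmetry shortcut is the one point that needs care. The permutation you wrote, $(u,v,w)\to(u',v',w')$, does not cycle the three literal cycles: it sends $uwv$ to $u'w'v'$, which is not a literal cycle of~$C$ at all. A permutation that \emph{does} cycle the boundary data is $u\mapsto w\mapsto v'\mapsto u$ together with $v\mapsto u'\mapsto w'\mapsto v$, but whether this extends to an automorphism of the full toroidal triangulation in Figure~\ref{fig:clause-gadget} is not something you can read off from the boundary alone; it depends on the interior triangulation, and the paper gives no indication that such an automorphism exists. Since you already anticipated this and proposed falling back to the six-case analysis, your proof plan is sound---just be aware that the fallback is in fact what the paper does, and that the symmetry claim as stated is incorrect and would need to be either repaired or dropped.
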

\begin{proof} 
By duality and Proposition~\ref{prop:serious_edges_clause}
  the monochromatic case holds if we show that in each of the
  following situations there is exactly one feasible spin-assignment
  extension to $C$ (node labels as in Figure~\ref{fig:clause-gadget}):
  (a) when the literal cycle $uwv$ is assigned $+++$ and spins $++-$ 
  are assigned to cycles $v'wu'$ and $v'uw'$, 
  (b) when the literal cycle $v'wu'$ is assigned $+++$ and spins $++-$
  are assigned to cycles $uwv$ and $v'uw'$, and
  (c)  when the literal cycle $v'uw'$ is assigned $+++$ and spins $++-$ 
  are assigned to cycles $uwv$ and $v'wu'$.
Each case is worked out separately in Figures~\ref{fig:onetrue1},
  \ref{fig:onetrue2}, and~\ref{fig:onetrue3}.

\begin{figure}[h]
\centering
\subfigure[]{
\ifpdf\input{cl_gad_mix_true_1.pdf_t}\else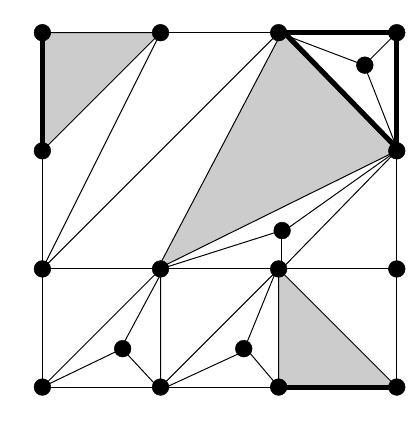\fi
\label{fig:onetrue1}
}
\hspace{2em}
\subfigure[]{
\ifpdf\input{cl_gad_mix_true_2.pdf_t}\else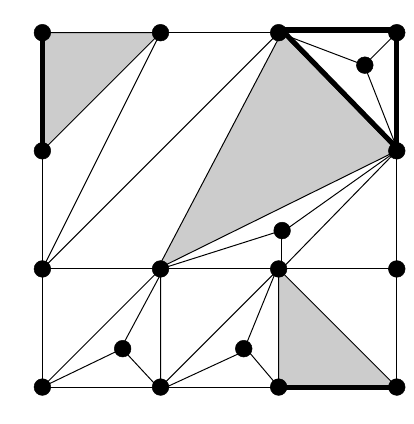\fi
\label{fig:onetrue2}
}\hspace{2em}
\subfigure[]{
\ifpdf\input{cl_gad_mix_true_3.pdf_t}\else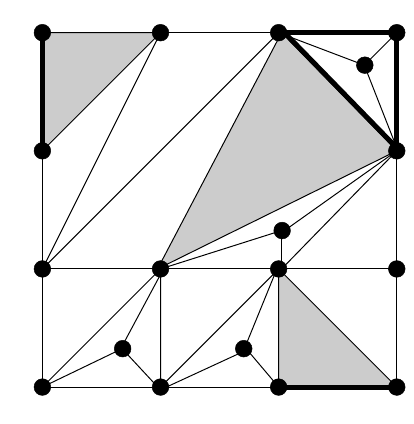\fi
\label{fig:onetrue3}
}
\caption{Unique forced 
  satisfying spin-assignments to a clause gadget 
  when exactly one literal cycle is monochromatic.
Each shown spin-assignment encodes a truth value assignment to 
  the variables of a clause where not all truth values
  are equal.}
\label{fig:assig_exactly_one_true}
\end{figure}

\begin{figure}[ht]
\centering
\subfigure[]{
\ifpdf\input{cl_gad_mix_false_1.pdf_t}\else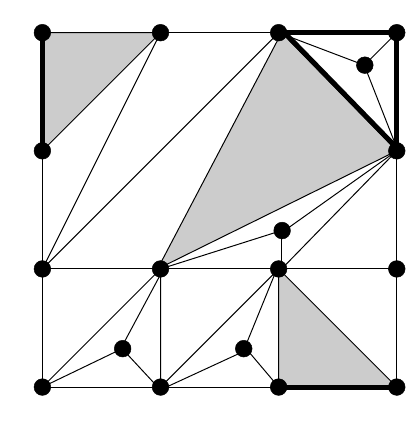\fi
\label{fig:onefalse1}}
\hspace{2em}
\subfigure[]{
\ifpdf\input{cl_gad_mix_false_2.pdf_t}\else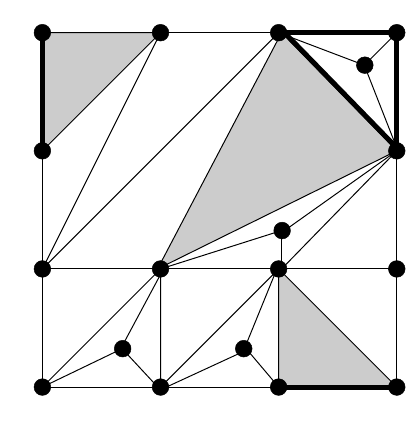\fi
\label{fig:onefalse2}
}\hspace{2em}
\subfigure[]{
\ifpdf\input{cl_gad_mix_false_3.pdf_t}\else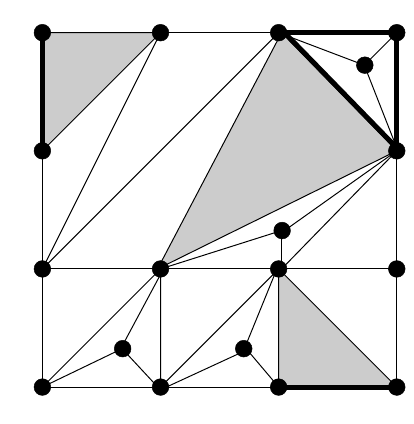\fi
\label{fig:onefalse3}}
\caption{Unique forced 
  satisfying spin-assignments to a clause gadget 
  when exactly one literal cycle is non-monochromatic.
Each shown spin-assignment encodes a truth value assignment to 
  the variables of a clause where not all truth values
  are equal.}
\label{fig:assig_exactly_one_false}
\end{figure}

For the non-monochromatic case, we proceed in the same way. 
Again, By duality and Proposition~\ref{prop:serious_edges_clause},
  it suffices to examine the following situations 
  (node labels as in Figure~\ref{fig:clause-gadget}):
  (a) when the literal cycle $uwv$ is assigned $++-$ and spins $+++$ 
  are assigned to cycles $v'wu'$ and $v'uw'$,
  (b) when the literal cycle $v'wu'$ is assigned $++-$ and spins $+++$
  are assigned to cycles $uwv$ and $v'uw'$, and
  (c)  when the literal cycle $v'uw'$ is assigned $++-$ and spins $+++$ 
  are assigned to cycles $uwv$ and $v'wu'$.
Each case is worked out in Figures~\ref{fig:onefalse1}, \ref{fig:onefalse2},
  and~\ref{fig:onefalse3}.
\end{proof}

\section{The reduction} \label{sec:thereduction}
We now describe the reduction from $\PNAESAT$ to $\SatAssign$.
Let $\varphi$ be a Boolean formula in conjunctive normal form,
  where each clause has exactly three (non-negated) literals. 
Let $x_1,\ldots,x_n$ be 
  the variables and $Cl_1,\ldots,Cl_m$ be the clauses of $\varphi$.
Let $t_i$ denote the number of times variable $x_i$ appears in
  the collection of clauses (multiple occurrences are counted multiple
  times). 
Define 
\begin{displaymath}
k_i = \left\{
\begin{array}{ll}
 2, & \text{if $t_i =1$}, \\
 2 \lceil (1/2)\log_{2}(t_i)\rceil, & \text{if $t_i>1$}.
\end{array} \right.
\end{displaymath}
To each variable $x_i$ we associate a choice gadget $L_i$.
To each clause $Cl_j$ we associate a clause gadget $C_j$.
For $i=1,\ldots,n$, we identify the 
  starting cycle of a $k_i$-replicator
  gadget $R^{k_i}$ with the variable cycle of the choice gadget $L_i$
  so that the fundamental edge of $L_i$ 
  that belongs to $L_i$'s variable cycle and 
  the fundamental edge of $R^{k_i}$ that belongs 
  to its starting cycle coincide. 
Note that the number of 
  end cycles of $R^{k_i}$ is at least $t_i$.
For $i = 1,\ldots,n$, identify $t_i$ 
  end cycles of $R^{k_i}$ with literal cycles 
  of the clause gadgets $C_1,\ldots,C_m$ where variable $x_i$ appears 
  in such a way that the fundamental 
  edges of the end cycles of $R^{k_i}$ and 
  the fundamental edges of the literal cycles of the clause gadgets
  coincide. 
Identify the remaining $2^{k_i}-t_{i}$ end cycles 
  of $R^{k_i}$ (if any) with the outer cycle of a cap gadget.
Denote by $T_{\varphi}$ the surface triangulation thus obtained.

We first make a simple observation.
\begin{lemma}\label{lem:fund_clause_serious}
The fundamental edges of each clause gadget of $T_{\varphi}$ are serious.
\end{lemma}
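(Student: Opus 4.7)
The plan is to trace the identifications made in the construction of $T_{\varphi}$ and then invoke statement~(\ref{st1_rep}) of Corollary~\ref{coroll:replicatorgadget}. Each fundamental edge of a clause gadget $C_j$ lies on one of the three literal cycles of $C_j$, and by the construction of $T_{\varphi}$ every literal cycle of $C_j$ has been identified with an end cycle of some replicator gadget $R^{k_i}$, with the further requirement that the fundamental edge on the literal cycle coincides with a fundamental edge on the identified end cycle. In other words, as an edge of $T_{\varphi}$, every fundamental edge of $C_j$ is simultaneously a fundamental edge of some $R^{k_i}$.

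Next I would argue that seriousness transfers from the replicator gadget to the ambient triangulation $T_{\varphi}$. The gadgets are glued only along boundary cycles (the length-three cycles circumscribing holes), which carry no faces of their own; hence the face set of $T_{\varphi}$ is the disjoint union of the face sets of its constituent choice, replicator, clause and cap gadgets. Therefore, if $s$ is any satisfying spin-assignment to $T_{\varphi}$, then the restriction of $s$ to the vertex set of any one gadget $G$ is itself a satisfying spin-assignment of $G$, because all faces of $G$ are frustrated by $s$.

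Applying this observation to $G=R^{k_i}$ for each replicator gadget attached in the construction, and invoking statement~(\ref{st1_rep}) of Corollary~\ref{coroll:replicatorgadget}, I conclude that every fundamental edge of $R^{k_i}$, and thus every fundamental edge of $C_j$ identified with one, is monochromatic under $s$. Since $s$ was an arbitrary satisfying spin-assignment to $T_{\varphi}$, the fundamental edges of $C_j$ are serious, proving the lemma.

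The only point that deserves any care is the bookkeeping of the gluings, i.e.\ verifying that the face set of $T_{\varphi}$ decomposes as claimed so that a satisfying assignment of $T_{\varphi}$ restricts to a satisfying assignment of each gadget; once that is noted, the lemma is essentially an immediate consequence of statement~(\ref{st1_rep}) of Corollary~\ref{coroll:replicatorgadget}, and no new case analysis is needed.
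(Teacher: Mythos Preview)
Your proof is correct and follows the same approach as the paper: both arguments observe that the fundamental edges of each clause gadget are identified with fundamental edges of a replicator gadget and then invoke Corollary~\ref{coroll:replicatorgadget}~\ref{st1_rep}. Your version is more explicit than the paper's one-line proof, in that you spell out why seriousness in the gadget $R^{k_i}$ implies seriousness in the ambient triangulation $T_{\varphi}$ (via the restriction argument and the disjointness of face sets), a point the paper leaves implicit.
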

\begin{proof}
Just observe that by construction of $T_{\varphi}$, 
  fundamental edges of clause gadgets
  are identified with fundamental edges
  of replicator gadgets which are
  known to be serious, as established by  
  Corollary~\ref{coroll:replicatorgadget}~\ref{st1_rep}.
\end{proof}

Note that, by construction, the surface on which~$T_{\varphi}$ is embedded
  is an orientable closed surface of genus
  $m +n + 4 \sum_{i=1}^{n}(2^{k_i}-1)$ ($1$ due to each 
  choice gadget $L_1,\ldots,L_n$, another $1$ due to each
  clause gadget $C_1,\ldots,C_m$, and 
  $4(2^{k_i}-1)$ due to each replicator gadget $R^{k_i}$, $i=1,\ldots,n$).
Clearly, since 
  each $3$-cycle circumscribing gadget holes were identified
  with a $3$-cycle circumscribing another gadget hole, 
  the surface on which~$T_{\varphi}$ is embedded 
  \emph{does not} have holes, i.e.~its a closed surface.
Moreover, since the construction process
  dos not create additional faces,
  each face of $T_{\varphi}$ is a face of some gadget.
Thence, each face is bounded by a $3$-cycle, so $T_{\varphi}$
  is a triangulation.
Finally, note that since each of the gadgets used in the 
  construction of $T_{\varphi}$ is embeddable in an orientable 
  surface with holes, the resulting surface on which $T_{\varphi}$
  is embedded is also an orientable surface.
Summarizing, $T_{\varphi}$ is a triangulation of an orientable
  closed surface.


We say that $\varphi$ is \emph{connected} if for every non-trivial 
  partition $\{S,\bar{S}\}$ of the clauses of $\varphi$ 
  (i.e.~$S,\bar{S}\neq\emptyset$,
  $S\cap\bar{S}=\emptyset$, and $S\cup\bar{S}$ equals the set of clauses 
  of $\varphi$) there
  is at least one variable that appears in one of the clauses 
  in $S$ and in one of the clauses of $\bar{S}$.

We now make a couple of useful observations.
\begin{lemma}\label{lem:connected}
Let $\varphi$ be an instance of $\PNAESAT$.
If $\varphi$ is connected, then the surface in which 
  $T_{\varphi}$ is embedded is also connected.
\end{lemma}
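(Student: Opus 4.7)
The plan is to reduce topological connectedness of $T_\varphi$ to a purely combinatorial statement about the pattern of identifications performed in Section~\ref{sec:thereduction}, and then to recognize that statement as essentially the hypothesis that $\varphi$ is connected.

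First, I would establish that each of the constituent gadgets entering the construction of $T_\varphi$ is, by itself, path-connected as a surface with holes: the choice gadget and clause gadget are connected by direct inspection of Figures~\ref{fig:choice_gadget} and~\ref{fig:clause-gadget}, the cap gadget is connected by inspection of Figure~\ref{fig:cap}, and every $k$-replicator gadget $R^{k}$ is connected by a trivial induction on $k$, since a block-replicator is connected and gluing two connected surfaces-with-holes along a common $3$-cycle yields a connected surface.

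Next, I would form the auxiliary incidence multigraph $H_{\varphi}$ whose vertices are the individual gadgets used to build $T_{\varphi}$ (namely the $L_{i}$'s, the $R^{k_{i}}$'s, the $C_{j}$'s, and the cap gadgets) and whose edges record the identifications performed: $L_{i}\sim R^{k_{i}}$ for each $i$, an edge $R^{k_{i}}\sim C_{j}$ for each end cycle of $R^{k_{i}}$ identified with a literal cycle of $C_{j}$, and an edge $R^{k_{i}}\sim \textrm{cap}$ for each end cycle capped off. Since every gadget is path-connected and every identification happens along a $3$-cycle (itself path-connected), connectedness of $T_{\varphi}$ is equivalent to connectedness of $H_{\varphi}$: a path between two points in $T_{\varphi}$ exists if and only if one can hop between their host gadgets along shared identification cycles.

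It therefore remains to argue that $H_{\varphi}$ is connected. Cap vertices are pendants, so they are irrelevant for connectedness. Contracting each edge $L_{i}\sim R^{k_{i}}$ (a legal move since each replicator attaches to exactly one choice gadget via its starting cycle), what remains is precisely the bipartite variable--clause incidence graph of $\varphi$: vertices are the $L_{i}$'s and $C_{j}$'s, with $L_{i}\sim C_{j}$ iff $x_{i}$ appears in $Cl_{j}$. If this bipartite graph were disconnected, its clause-side vertices would split as $\{C_{j}\colon j\in S\}\cup\{C_{j}\colon j\in\bar S\}$ for some nontrivial partition $\{S,\bar S\}$ with no variable occurring in both parts, contradicting the assumed connectedness of $\varphi$. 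Thus $H_{\varphi}$ is connected and hence so is the surface carrying $T_{\varphi}$.

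The only mildly delicate point — and the ``main obstacle'' in an otherwise routine argument — is the bookkeeping around replicator gadgets: they add intermediate nodes between choice and clause gadgets in $H_{\varphi}$ but do not alter the underlying connectivity pattern, precisely because each $R^{k_{i}}$ is connected and hangs off a single choice gadget.
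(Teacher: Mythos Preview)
Your argument is correct and follows essentially the same route as the paper's: both reduce connectedness of the surface to connectedness of the variable--clause incidence structure of $\varphi$. The paper argues tersely by contraposition (a disconnection of the surface induces a nontrivial clause partition with no shared variable), while you make the same reduction explicit via the auxiliary graph $H_{\varphi}$ and take care to verify that each gadget is itself connected; this extra bookkeeping is sound but not a different idea.
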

\begin{proof}
Assume $T_{\varphi}$ is embedded in a non-connected surface.
Consider the set $S$ of clauses whose 
  associated clause gadgets are 
  embedded in one of the connected surface components, say $\mathcal{S}$.
Let $\bar{S}$ be the collection of clauses not in $S$.
Note that $\{S,\bar{S}\}$ is non-trivial.
Moreover, the set of variables that appear in clauses in $S$
  (respectively, in $\bar{S}$)
  correspond to those variables associated to choice gadgets 
  embedded in $\mathcal{S}$ (respectively, not in $\mathcal{S}$).
Both of theses collection of variables must be disjoint, contradicting
  the fact that $\varphi$ is connected.
\end{proof}

\begin{lemma}\label{lemm:allsamesign}
Consider an instance $\varphi$ of $\PNAESAT$.
Let  $Cl$ be a clause of $\varphi$ and $x_1$, $x_2$, and $x_3$  
  the (not necessarily distinct) variables appearing in $Cl$.
Let $L_1$, $L_2$, and $L_3$ be the (not necessarily distinct) 
  choice gadgets of $T_{\varphi}$
  associated to $x_1$, $x_2$ and $x_3$, respectively.
Let $C$ be the clause gadget of $T_{\varphi}$ associated to $Cl$.
Then, for every
  satisfying spin-assignment $s$ to $T_{\varphi}$,
  the literal cycles of $C$ and the variable
  cycles of $L_{1}$, $L_{2}$, and $L_{3}$ 
  have the same sign. 
Moreover, if $\varphi$ is connected, then all literal cycles and variable
  cycles of $T_{\varphi}$ have the same sign.
\end{lemma}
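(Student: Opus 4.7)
The plan is to leverage the already established properties of the gadgets and then propagate the sign information through the combinatorial structure of $T_{\varphi}$. First I would appeal to \lemref{lem:fund_clause_serious}, which tells us that the fundamental edges of each clause gadget in $T_{\varphi}$ are serious. This lets us invoke \propref{prop:serious_edges_clause} to conclude that, under any satisfying spin-assignment $s$, all three literal cycles of $C$ share a common sign.

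Next, I would trace the sign from each literal cycle of $C$ back to the variable cycle of the associated choice gadget. By construction of $T_{\varphi}$, each literal cycle of $C$ is identified with an end cycle of the replicator gadget $R^{k_j}$ attached to $L_j$ (for $j \in \{1,2,3\}$); and the starting cycle of $R^{k_j}$ is in turn identified with the variable cycle $\calC_{L_j}$. \corref{coroll:replicatorgadget}~\ref{st2_rep} guarantees that in any satisfying spin-assignment to $R^{k_j}$, the starting cycle and all end cycles share the same sign. Since the restriction of $s$ to $R^{k_j}$ is a satisfying spin-assignment, the literal cycle of $C$ coming from $R^{k_j}$ has the same sign as $\calC_{L_j}$. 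Combining this with the previous paragraph yields the first claim.

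For the second claim, assume $\varphi$ is connected. Define a graph $H$ whose nodes are the variable cycles of the $L_i$'s and the literal cycles of the $C_j$'s, with an edge between two such cycles if they are connected by a replicator gadget in $T_{\varphi}$ (equivalently, if the corresponding variable appears in the corresponding clause). The first part of the lemma, applied clause by clause, shows that any two cycles joined by an edge of $H$ carry the same sign under $s$. The hypothesis that $\varphi$ is connected means exactly that $H$ is connected (any bipartition of the clauses leaves some variable spanning both sides, producing an edge of $H$ across the cut). Therefore all variable cycles and all literal cycles share one common sign.

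The main obstacle is essentially bookkeeping rather than a deep step: I must be careful that a variable $x_i$ appearing multiple times in the collection of clauses, or even appearing multiple times in the same clause, is handled correctly. This is taken care of by the fact that \emph{all} $t_i$ end cycles of $R^{k_i}$ inherit the sign of $\calC_{L_i}$ by \corref{coroll:replicatorgadget}~\ref{st2_rep}, uniformly across whichever literal cycles (in possibly the same or different clause gadgets) they get identified with. The capped end cycles play no role since they are not identified with any literal cycle. With this uniform propagation in place, the argument reduces to the simple graph-connectivity observation above.
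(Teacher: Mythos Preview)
Your argument for the first claim is correct and essentially identical to the paper's: invoke \lemref{lem:fund_clause_serious} so that \propref{prop:serious_edges_clause} applies, giving a common sign to the literal cycles of $C$; then propagate through each replicator via \corref{coroll:replicatorgadget}~\ref{st2_rep} to the associated variable cycle.

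For the second claim your connectivity strategy is sound, but the graph $H$ you define does not do the job. In your $H$, each literal cycle is a separate node whose only edge goes to the variable cycle of its associated variable; the three literal cycles of a single clause gadget are not joined to one another in $H$. Thus $H$ can be disconnected even when $\varphi$ is connected: with $\varphi = (x_1\vee x_2\vee x_3)\wedge(x_1\vee x_4\vee x_5)$, your $H$ has five components (each $\calC_{L_i}$ together with whichever literal cycles are attached to it). The fix is immediate: either add an edge between any two literal cycles belonging to the same clause gadget---they share a sign by the first part of the lemma as well---or, more cleanly, let the nodes of $H$ be only the variable cycles, with an edge between $\calC_{L_i}$ and $\calC_{L_j}$ whenever $x_i$ and $x_j$ appear together in some clause. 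Connectedness of $\varphi$ is then exactly connectedness of this graph, and the first part supplies equal signs along every edge; the literal cycles then inherit their sign from the variable cycles they are linked to. The paper handles this step more tersely, citing \lemref{lem:connected} (connectedness of the embedding surface) and leaving the propagation implicit.
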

\begin{proof}
By Proposition \ref{prop:serious_edges_clause}
  and Lemma~\ref{lem:fund_clause_serious},
  under any satisfying spin-assignment to $T_{\varphi}$
  all literal cycles of $C$ have the same sign. 
Without loss of generality, we can assume that all literal 
  cycles of $C$ are positive.
Hence, the end cycles of the replicator 
  gadgets, say $R^{k_{1}}$, $R^{k_{2}}$, and $R^{k_{3}}$,
  which are identified with the literal 
  cycles of $C$, must all be positive. 
Since $k_1$, $k_2$ and $k_3$ are even, by 
  Corollary~\ref{coroll:replicatorgadget}~\ref{st2_rep}, 
  the starting cycles of $R^{k_{1}}$, 
  $R^{k_{2}}$ and $R^{k_{3}}$ are positive.
Given that the variable cycles of $L_1$, $L_2$, and $L_3$ are 
  identified in $T_{\varphi}$ with the starting cycles of 
  $R^{k_{1}}$, $R^{k_{2}}$ and $R^{k_{3}}$, the first stated claim follows.

The last statement follows trivially from Lemma~\ref{lem:connected}.
\end{proof}

\begin{theorem}\label{theo:main}
Let $\varphi$ be an instance of $\PNAESAT$.
If $\varphi$ is connected, then 
\begin{enumerate}
\item For each truth value assignment that 
  witnesses membership of $\varphi$ in $\PNAESAT$
  there is a unique up to duality satisfying spin-assignment to $T_{\varphi}$.
\item For every pair of duality related
  satisfying spin-assignment to $T_{\varphi}$ there is
  exactly one truth value assignment that witnesses membership
  of $\varphi$ in $\PNAESAT$.
\end{enumerate}  
\end{theorem}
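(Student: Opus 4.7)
The plan is to establish a bijection between the truth assignments that witness membership of $\varphi$ in $\PNAESAT$ and the pairs $\{s,\mbox{-}s\}$ of duality related satisfying spin-assignments to $T_\varphi$. The correspondence sends a spin-assignment $s$ to the truth assignment $\tau$ with $\tau(x_i)=\textsc{True}$ if and only if the variable cycle $\calC_{L_i}$ of $L_i$ is monochromatic under $s$. Observe that $\tau$ depends only on the pair $\{s,\mbox{-}s\}$, since chromaticity is invariant under global negation. The two statements of the theorem amount to surjectivity and well-definedness of this map.

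For Part 1, I would fix a witness $\tau$ and construct a satisfying spin-assignment to $T_\varphi$ gadget by gadget. Since $\varphi$ is connected, Lemma~\ref{lem:connected} gives that $T_\varphi$ is embedded in a connected surface, so by Lemma~\ref{lemm:allsamesign} we may assume (absorbing the remaining freedom into the global duality) that every variable and literal cycle of $T_\varphi$ is positive. For each $i$, prescribe the chromaticity of $\calC_{L_i}$ according to $\tau(x_i)$; Proposition~\ref{prop:choice_properties} then yields a unique compatible spin-assignment on $L_i$. The spins on $\calC_{L_i}$ propagate through the replicator $R^{k_i}$ by Corollary~\ref{coroll:replicatorgadget}: item~\ref{st2_rep} keeps the sign positive (because $k_i$ is even), item~\ref{st3_rep} preserves chromaticity, and item~\ref{st4_rep} provides the unique extension. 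The end cycles of $R^{k_i}$ are then identified either with literal cycles of clause gadgets or with outer cycles of caps. In each clause gadget $C_j$, since $\tau$ is NAE-satisfying, its three literal cycles are not all monochromatic and not all non-monochromatic; combined with Lemma~\ref{lem:fund_clause_serious}, Proposition~\ref{prop:clause-feasible} delivers the required unique extension on $C_j$. Finally, Proposition~\ref{prop:cap} handles the caps. Together with its dual, the resulting assignment is the unique satisfying extension (up to duality) of the prescribed chromaticities.

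For Part 2, given a satisfying spin-assignment $s$, define $\tau$ as above; as noted, this is well-defined on the pair $\{s,\mbox{-}s\}$. To verify $\tau$ is an NAE-SAT witness, fix any clause $Cl_j$ and let its literal cycles be $\ell_1,\ell_2,\ell_3$. By Lemma~\ref{lem:fund_clause_serious} the fundamental edges of $C_j$ are serious, so Proposition~\ref{prop:clause-nonfeasible} rules out the possibility that $\ell_1,\ell_2,\ell_3$ are all monochromatic or all non-monochromatic. Since each $\ell_k$ is identified with an end cycle of a replicator whose starting cycle is identified with some $\calC_{L_{i_k}}$, Corollary~\ref{coroll:replicatorgadget}~\ref{st3_rep} (using that $k_{i_k}$ is even) shows $\ell_k$ has the same chromaticity as $\calC_{L_{i_k}}$. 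Hence the three variables occurring in $Cl_j$ are not all assigned the same truth value by $\tau$, so $\tau$ is indeed a witness.

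The main obstacle is the gluing step in Part 1: each gadget's uniqueness statement pins down a spin-assignment given the spins on its boundary cycles, and one must check that the spins a gadget induces on a shared boundary cycle are exactly the spins the neighbouring gadget requires in order for its uniqueness proposition to apply. The seriousness of fundamental edges along the shared cycles (Corollary~\ref{coroll:replicatorgadget}~\ref{st1_rep} together with Proposition~\ref{prop:choice_properties}) rigidifies the boundaries enough to fix the assignment on them up to the admissible chromatic choice, and the sign- and chromaticity-preservation properties of even-depth replicators, combined with Lemma~\ref{lemm:allsamesign}, ensure that the locally forced extensions are globally consistent wherever two gadgets meet.
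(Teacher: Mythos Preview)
Your proposal is correct and follows essentially the same route as the paper: encode truth values by chromaticity of variable cycles, propagate through the even-depth replicators, and appeal to the same gadget-wise uniqueness results (Propositions~\ref{prop:choice_properties}, \ref{prop:cap}, \ref{prop:clause-nonfeasible}, \ref{prop:clause-feasible} and Corollary~\ref{coroll:replicatorgadget}) in the same order. Your final paragraph on boundary consistency and your invocation of Lemma~\ref{lemm:allsamesign} in Part~1 make the uniqueness half of that statement slightly more explicit than the paper does, but the argument is otherwise identical.
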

\begin{proof}
Let $a_1,a_2,\ldots,a_n$ be a truth value 
  assignment to the variables $x_1,\ldots,x_n$ that 
  is a witness of membership of $\varphi$ in $\PNAESAT$.
We claim that there is a unique up to duality satisfying
  spin-assignment to $T_{\varphi}$.
As usual, let $Cl_1,\ldots,Cl_m$ be the clauses of $\varphi$,
  let $C_1,\ldots,C_m$ denote the associated clause gadgets,
  and let $L_1,\ldots,L_n$ and $R^{k_1},\ldots,R^{k_n}$ be the 
  choice and replicator gadgets associated to variables 
  $x_1,\ldots,x_n$.

If $a_i$ is \textsc{True}, fix the spins of all nodes 
  of the variable cycle of $L_i$ in such a way that the 
  cycle ends up being positive and monochromatic 
  (observe that this can be done in a unique way).
Otherwise, $a_i$ is \textsc{False}, fix the spins of all nodes 
  of the variable cycle of $L_i$ in such a way that the 
  cycle ends up being positive and non-monochromatic 
  (observe that since fundamental edges of choice gadgets are serious,
  this can again be done in a unique way).
Extend the so far partially defined satisfying spin-assignment 
  to the union of choice gadgets (by Proposition~\ref{prop:choice_properties},
  such a satisfying assignment extension exists and is unique).

Similarly, fix the spins of the nodes of the 
  literal cycles of each clause gadget $C_j$ according to 
  the truth value taken by the associated Boolean formula variable,
  i.e.~if the variable's value is \textsc{True}, make 
  the literal cycle positive and monochromatic, and 
  positive and non-monochromatic otherwise
  (observe again that such spin-assignments 
  can be done in a unique way).
Extend once more the so far partially defined spin-assignment 
  to the union of clause gadgets.
We claim that such an extension exists and is unique.
Indeed, given that $a_1,\ldots,a_n$ is a witness of membership of 
  $\varphi$ in $\PNAESAT$, the variables in each clause $Cl_j$ 
  do not take the same truth value under the assignment $a_1,\ldots,a_n$.
Thus, the aforementioned spin-assignment to the literal cycles of 
  $C_j$ is such that not all literal cycles 
  end up having the same chromaticity.
By Proposition~\ref{prop:clause-feasible}, 
  each clause gadget has a unique satisfying spin assignment extension,
  thus establishing our claim.

Recall that variable cycles of choice gadgets 
  (respectively, literal cycles of clause gadgets) are identified with 
  starting cycles (respectively, end cycles) of replicator gadgets.
Hence, for all $i=1,\ldots,n$, 
  the partial spin-assignment thus far defined makes 
  the start and end cycle of the replicator gadget $R^{k_i}$ 
  positive and monochromatic (respectively, non-monochromatic)
  if and only if $a_i$ is \textsc{True} (respectively, \textsc{False}).
Given that $k_i$ is even and positive for all $i = 1,\ldots, n$,
  by Corollary~\ref{coroll:replicatorgadget}~\ref{st4_rep}
  and Proposition~\ref{prop:cap}, there is a unique extension 
  of the previously defined partial spin-assignment to all nodes of 
  replicator gadgets so the resulting spin-assignment is a satisfying
  spin-assignment for $T_{\varphi}$.

We now prove the second part of the claimed result.
Assume there is a satisfying spin-assignment~$s$ to $T_{\varphi}$. 
By Lemma~\ref{lemm:allsamesign} and since $\varphi$ is connected, 
  all literal and variable cycles of $T_{\varphi}$ have the same
  sign, say  positive.
For $i=1,\ldots,n$, let $a_i$ be \textsc{True} if the variable cycle 
  of $L_i$ is positive and monochromatic, 
  and \textsc{False} if the variable
  cycle of $L_i$ is positive and non-monochromatic. 
We claim that $a_1,\ldots,a_n$ is a witness of membership of 
  $\varphi$ in $\PNAESAT$.
Indeed, assume $x_{j_1}$, $x_{j_2}$, and $x_{j_3}$ are the
  (not necessarily distinct) variables appearing in clause $Cl_j$.
Let $s\in\{1,2,3\}$.
Since the start cycle of the replicator gadgets $R^{k_{j_s}}$
  is identified with the variable
  cycle of the choice gadget $L_{j_s}$, 
  then by Corollary~\ref{coroll:replicatorgadget} the 
  end cycles of $R^{k_{j_s}}$ must be monochromatic if and only if
  $a_{j_s}$ is \textsc{True}.
Since the non-capped end cycles of replicator gadgets are identified
  with the literal cycles of clause gadgets, we have that 
  the literal cycle of the clause gadget $C_j$ associated to 
  the variable $x_{j_s}$ is monochromatic if and only if 
  $a_{j_s}$ is \textsc{True}.
Moreover, all literal cycles of clause gadgets are positive.
Since $s$ is a satisfying spin-assignment, 
  by Proposition~\ref{prop:clause-nonfeasible} and 
  Proposition~\ref{prop:clause-feasible},
  the literal cycles of $C_j$ can not all be either 
  monochromatic or non-monochromatic.
This implies that $a_{j_1}$, $a_{j_2}$, and $a_{j_3}$ are not
  all equal, as we wanted to establish.
\end{proof}

\begin{corollary}\label{coroll:NPhard} 
Let $\varphi$ be an instance of $\PNAESAT$.
Then, $\varphi$ is satisfiable if and only if there is a satisfying
  spin-assignment to $T_{\varphi}$.
\end{corollary}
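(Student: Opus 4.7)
The corollary extends Theorem~\ref{theo:main} from connected formulas to arbitrary instances of $\PNAESAT$, so the plan is to reduce to the connected case and then invoke the theorem. First I would observe that when $\varphi$ is connected, the biconditional is immediate: Theorem~\ref{theo:main}(1) gives that any satisfying truth assignment to $\varphi$ yields a satisfying spin-assignment to $T_{\varphi}$, and Theorem~\ref{theo:main}(2) gives the converse. The only work in the corollary is therefore the (easy) reduction from an arbitrary $\varphi$ to its connected pieces.

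Next, I would decompose $\varphi$ into its maximal connected subformulas $\varphi_1,\ldots,\varphi_\ell$, where two clauses are in the same component iff they are linked by a chain of clauses sharing at least one variable. Because no variable appears in clauses belonging to two different components, the reduction procedure applied to $\varphi$ can be described as the disjoint application of the reduction to each $\varphi_i$: the choice gadget associated to a variable $x$ and the replicator gadget hanging off of it are entirely contained in the component in which $x$ occurs, and every clause gadget is contained in the component of its clause. Thus $T_{\varphi}$ is the triangulation obtained by taking the (disjoint) union of the surface triangulations $T_{\varphi_1},\ldots,T_{\varphi_\ell}$.

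Since a spin-assignment to a disjoint union is satisfying iff its restriction to each component is satisfying, and $\varphi$ is satisfiable iff every $\varphi_i$ is satisfiable, the corollary will follow by applying Theorem~\ref{theo:main} to each connected component $\varphi_i$ (which by construction is an instance of $\PNAESAT$ and is connected by maximality).

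The main step to be careful about is the claim that the gadget construction truly respects the connected decomposition; this relies on the fact that gadget identifications in Section~\ref{sec:thereduction} occur only along fundamental edges associated to variable-clause incidences, so no identification ever crosses between components. Once this is observed, no calculation is required and the biconditional follows component by component. I expect no genuine obstacle beyond this bookkeeping.
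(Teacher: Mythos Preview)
Your proposal is correct and matches the paper's own proof essentially line for line: handle the connected case directly via Theorem~\ref{theo:main}, then decompose an arbitrary $\varphi$ into connected subformulas $\varphi_1,\ldots,\varphi_\ell$, observe that $T_{\varphi}$ is the disjoint union of the $T_{\varphi_i}$, and apply the theorem componentwise. Your added remark about why the gadget identifications respect the decomposition is a detail the paper leaves implicit.
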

\begin{proof}
If $\varphi$ is connected, the result is immediate from
  Theorem~\ref{theo:main}.
Assume $\varphi$ is not connected.
Then, there are $\varphi_1,\ldots,\varphi_c$ instances of 
  $\PNAESAT$ such that each $\varphi_i$ is connected and 
  $\varphi=\wedge_{i=1}^{c}\varphi_i$.
Moreover, $\varphi$ belongs to $\PNAESAT$ if and only if
  $\varphi_1,\ldots,\varphi_c$ belong to $\PNAESAT$.
By Theorem~\ref{theo:main}, this is equivalent to saying 
  that $\varphi$ belongs to $\PNAESAT$ if and only 
  the union of the surface triangulations $T_{\varphi_1},
  \ldots, T_{\varphi_c}$, i.e.~$T_{\varphi}$, admits a satisfying spin
  assignment.
\end{proof}

The next result allows us to handle, in our reduction, instances 
  of $\PNAESAT$ which are not connected.
\begin{lemma}
Let $\varphi$ be an instance of $\PNAESAT$.
Then, there exists a log-space (hence, polynomial time)
  computable instance $\varphi'$ of $\PNAESAT$ such that
  $\varphi'$ is connected and the following equality holds:
\begin{eqnarray*}
\lefteqn{\left|\{\vec{a}=(a_1,\ldots,a_n): 
  \text{$\vec{a}$ is a witness of membership of $\varphi$
  in $\PNAESAT$}\}\right|
  =  } \\
  & & \quad 
  \frac{1}{2}\left|\{\vec{a'}=(a'_1,\ldots,a'_{n'}): 
  \text{$\vec{a'}$ is a witness of membership of $\varphi'$
  in $\PNAESAT$}\}\right|.
\end{eqnarray*}
\end{lemma}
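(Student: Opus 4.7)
The plan is to augment $\varphi$ with a small "doubling-and-linking" gadget on two fresh variables $w_1, w_2$ that (i) by itself admits exactly two satisfying spin-assignments to the fresh variables (namely the pair $(0,1)$ and $(1,0)$), and (ii) creates a hub tying every component of $\varphi$'s variable-incidence graph to $\{w_1, w_2\}$, making the whole formula connected.

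Concretely, I would add a single "doubling" clause forcing $w_1 \neq w_2$. Assuming clauses may repeat literals (the standard convention), one clause $(w_1, w_1, w_2)$ suffices; if the definition of $\PNAESAT$ in use forbids repetitions one substitutes a constant-size connected $\PNAESAT$ subformula on a handful of fresh variables having exactly two satisfying assignments, both with $w_1 \neq w_2$. Then for each variable $x_i$ of $\varphi$ append the "linking clause" $(w_1, w_2, x_i)$. Since $w_1 \neq w_2$ is forced by the doubling gadget, every linking clause is automatically not-all-equal regardless of the value of $x_i$, so it imposes no constraint on the original variables of $\varphi$.

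The counting identity is then immediate: every satisfying assignment of $\varphi'$ decomposes uniquely as a satisfying assignment of $\varphi$ together with one of the two admissible values of $(w_1, w_2)$, giving $|\mathrm{witnesses}(\varphi')| = 2\,|\mathrm{witnesses}(\varphi)|$, which is exactly the claimed equality after dividing by two. Connectivity of $\varphi'$ is also immediate, since each $x_i$ co-occurs with $w_1$ and $w_2$ in some linking clause, making $\{w_1, w_2\}$ a universal hub in the variable-incidence graph. The reduction is clearly computable in log-space, as it only scans $\varphi$'s variables once and emits a fixed template of additional variables and clauses.

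The only delicate point, and therefore the main obstacle, is handling the variant of $\PNAESAT$ that disallows repeated literals within a clause. In that case one has to exhibit a small explicit connected $\PNAESAT$ subformula on a few fresh variables with exactly two satisfying assignments, both enforcing $w_1 \neq w_2$. Its existence follows from standard $\PNAESAT$ gadget constructions used in the classical NP-hardness proof, but it is the only spot where a little care (rather than pure bookkeeping) is required.
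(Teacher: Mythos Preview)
Your approach is correct and is essentially the paper's own construction: the paper also introduces two fresh variables $y,z$ and adds, for each original variable $x_i$, the linking clause $(x_i,y,z)$. The one difference is that the paper does \emph{not} add a separate ``doubling'' clause to force $y\neq z$. Instead it observes that if $y=z$ in a witness of $\varphi'$, then every linking clause forces $x_i=\neg y$, so all the $x_i$ are equal---which cannot NAE-satisfy any original clause of $\varphi$. Thus $y\neq z$ is enforced for free by the linking clauses together with the original clauses of $\varphi$. This sidesteps the repeated-literal issue entirely, so the ``delicate point'' you flag never arises. Your variant, by contrast, makes the $2{:}1$ correspondence slightly more explicit and does not rely on $\varphi$ having at least one clause.
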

\begin{proof}
Assume $x_1,\ldots,x_n$ are the variables and $C_1,\ldots,C_m$
  the clauses of $\varphi$. 
Consider two additional Boolean variables $y$ and $z$
  and define $n$ additional clauses $C'_1,\ldots,C'_n$ such that
  $C'_i=x_i\wedge y\wedge z$.
Let $\varphi'$ be the conjunction of $C_1,\ldots,C_n,C'_1,\ldots,C'_n$.
Clearly, $\varphi'$ is a connected instance of $\PNAESAT$ which is 
  log-space computable given $\varphi$.
Note that a membership witness $a_1,\ldots,a_s$ of an instance of 
  $\PNAESAT$ can not be such that all $a_i$'s are equal. 
This immediately implies that
  $x_1,\ldots,x_n$ is a witness of membership of $\varphi$
  in $\PNAESAT$ if and only if 
  $x_1,\ldots,x_n,y=0,z=1$ and 
  $x_1,\ldots,x_n,y=1,z=0$ are witnesses of membership of $\varphi'$  
  in $\PNAESAT$.
\end{proof}

\begin{corollary}\label{coroll:sharpPhard} 
Let $\varphi$ be an instance of $\PNAESAT$.
Then, there exists a log-space (hence, polynomial time)
  computable instance $\varphi'$ of $\PNAESAT$ such that
\begin{eqnarray*}
\lefteqn{\left|\{s: \text{$s$ is a satisfying spin-assignments to 
  $T_{\varphi'}$}\}\right|
  =  } \\
  & & \quad 
  4\cdot\left|\{\vec{a}=(a_1,\ldots,a_{n}): 
  \text{$\vec{a}$ is a witness of membership of $\varphi$
  in $\PNAESAT$}\}\right|.
\end{eqnarray*}
\end{corollary}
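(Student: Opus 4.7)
The plan is to chain together the preceding lemma (call it the doubling lemma) with Theorem~\ref{theo:main}. Given $\varphi$, let $\varphi'$ be the log-space computable connected instance of $\PNAESAT$ produced by the doubling lemma, so that the number of witnesses of membership of $\varphi'$ in $\PNAESAT$ is exactly twice the number of witnesses of membership of $\varphi$. Since $\varphi'$ is constructed in log-space from $\varphi$ and the map $\varphi'\mapsto T_{\varphi'}$ is itself clearly log-space computable (it only involves attaching choice, replicator, cap, and clause gadgets whose sizes are polynomial in $|\varphi'|$), the composition gives the required log-space computable association.

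Next, I would invoke Theorem~\ref{theo:main} with $\varphi'$ (which is connected, exactly the hypothesis needed). Part~(1) of the theorem sends each witness $\vec{a'}$ of $\varphi'$ to a pair $\{s,-s\}$ of satisfying spin-assignments to $T_{\varphi'}$, and part~(2) says that every satisfying spin-assignment to $T_{\varphi'}$ arises from exactly one witness in this duality-paired fashion. Hence the set of satisfying spin-assignments to $T_{\varphi'}$ is partitioned into duality pairs, one pair per witness of $\varphi'$, so its cardinality equals twice the number of witnesses of $\varphi'$ in $\PNAESAT$.

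Combining the two counts yields
\[
\#\{s : s \text{ satisfies } T_{\varphi'}\}
\;=\; 2\cdot\#\{\vec{a'} : \vec{a'} \text{ witnesses } \varphi'\}
\;=\; 2\cdot 2\cdot\#\{\vec{a} : \vec{a} \text{ witnesses } \varphi\}
\;=\; 4\cdot\#\{\vec{a} : \vec{a} \text{ witnesses } \varphi\},
\]
which is the desired equality. There is no real obstacle here: all the heavy lifting was already done in Theorem~\ref{theo:main} (the bijection, up to duality, between witnesses and satisfying spin-assignments for connected instances) and in the preceding lemma (making $\varphi$ connected at the cost of a factor of two in the witness count). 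The only thing worth double-checking is that the doubling lemma's $\varphi'$ is indeed a legitimate instance of $\PNAESAT$ to which Theorem~\ref{theo:main} applies, i.e.~that its clauses have three non-negated literals each and that the connectivity hypothesis is actually met by the construction in the lemma; both are immediate by inspection.
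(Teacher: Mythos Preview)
Your argument is correct and is precisely the intended one: the paper does not even write out a proof for this corollary, since it follows immediately by combining the preceding ``doubling'' lemma (to make $\varphi$ connected at the cost of a factor~$2$) with Theorem~\ref{theo:main} (which, for connected instances, gives a one-to-two correspondence between witnesses and satisfying spin-assignments via duality). The only thing superfluous in your write-up is the remark that $\varphi'\mapsto T_{\varphi'}$ is log-space computable---the corollary as stated only requires $\varphi'$ to be log-space computable from $\varphi$, though your observation is exactly what is used afterwards to conclude Theorem~\ref{th:count}.
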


To conclude, note that given an instance $\varphi$ of 
  $\PNAESAT$ with $n$ variables and $m$ clauses each of the 
  necessary gadgets can be constructed
  in $O(\log|V(T_\varphi')|)$ space, i.e.~in log-space in the size
  of the encoding of $\varphi$.
The number of choice, block-replicator, and clause gadgets that need to 
  be built are $n+2$, 
  $\sum_{i=1}^{n}(2^{k_i} -1) = O(m)$, and $m+n$, respectively. 
Hence, an encoding of a rotation system for $T_{\varphi'}$ can 
  be computed in log-space, thence 
  in polynomial time.
Then, Corollary~\ref{coroll:NPhard} and Corollary~\ref{coroll:sharpPhard}
  imply Theorem~\ref{th:sat} and Theorem~\ref{th:count}, respectively.

\medskip
To conclude, note that this work does not preclude that the problem 
  of deciding whether or not 
  a triangulation embedded in surfaces of a fixed constant genus
  admits a satisfying spin-assignment might be say decidable in polynomial
  time.
Neither does it preclude that the associated counting problem might
  be significantly easier than $\sharpP$-hard.
We believe this are interesting issues to study.

\section*{Acknowledgements}
The authors thank Martin Loebl for his enthusiasm, motivation, and 
  many helpful discussions.


\begin{thebibliography}{10}

\bibitem{citeulike:2111390}
R.~J. Baxter.
\newblock {\em Exactly Solved Models in Statistical Mechanics}.
\newblock Dover Publications, 2008.

\bibitem{CH}
N.~Creignou and M.~Hermann.
\newblock Complexity of generalized satisfiability counting problems.
\newblock {\em Information and Computation}, 125:1--12, February 1996.

\bibitem{cks}
N.~Creignou, S.~Khanna, and M.~Sudan.
\newblock {\em Complexity Classifications of Boolean Constraint Satisfaction
  Problems}.
\newblock SIAM Monographs in Discrete Mathematics and Applications. SIAM, 2001.

\bibitem{EKKKN}
L.~Esperet, F.~Kardos, A.~King, D.~Kr\'al, and S.~Norine.
\newblock Exponentially many perfect matchings in cubic graphs.
\newblock {\em Advances in Mathematics}, 227(4):1646--1664, 2011.

\bibitem{JKL}
A.~Jim\'enez, M.~Kiwi, and M.~Loebl.
\newblock Satisfying states of triangulations of a convex $n$-gon.
\newblock {\em The Electronic Journal of Combinatorics}, 17(1), 2010.

\bibitem{loebl09}
M.~Loebl.
\newblock {\em Discrete Mathematics in Statistical Physics}.
\newblock Advanced Lectures in Mathematics. Vieweg and Teubner, 2009.

\bibitem{LV}
M.~Loebl and J.~Vondr{\'a}k.
\newblock Towards a theory of frustrated degeneracy.
\newblock {\em Discrete Mathematics}, 271(1-3):179--193, 2003.

\bibitem{MT}
B.~Mohar and C.~Thomassen.
\newblock {\em Graphs on Surfaces}.
\newblock The Johns Hopkins University Press, Baltimore and London, 2001.

\bibitem{TJS}
T.J. Schaefer.
\newblock The complexity of satisfiability problem.
\newblock {\em Proceedings of the 10th Annual ACM Symposium on Theory of
  Computing}, pages 216--226, 1978.

\bibitem{citeulike:1289400}
J.~P. Sethna.
\newblock {\em Statistical Mechanics: Entropy, Order Parameters and Complexity
  (Oxford Master Series in Physics)}.
\newblock Oxford University Press, USA, 2006.

\end{thebibliography}
\end{document}
